\begin{document}

\title{Anti-Unwinding Sliding Mode Attitude Maneuver Control for Rigid
Spacecraft}
\author{Rui-Qi Dong, \IEEEmembership{Student Member, IEEE}, Ai-Guo Wu$^\ast$%
, \IEEEmembership{Member, IEEE} and Ying Zhang
\thanks{%
This work was supported in part by the National Natural Science Foundation
of China for Excellent Young Scholars under Grant No. 61822305; Major
Program of National Natural Science Foundation of China under Grant Numbers
61690210 and 61690212; The Fundamental Research Funds for the Central
Universities under Grant No. HIT.BRETIV.201907; Guangdong Natural Science
Foundation under Grant No. 2017A030313340; Shenzhen Municipal Project for
International Cooperation with Project No. GJHZ20180420180849805; Shenzhen
Municipal Basic Research Project for Discipline Layout with Project No.
JCYJ20180507183437860. \newline
$^\ast$ Corresponding author.} \thanks{%
The authors are with the Department of Mechanical Engineering and
Automation, Harbin Institute of Technology (Shenzhen), Shenzhen 518055
(e-mail: rykidong@163.com; ag.wu@163.com; zhangyinghit@126.com). } }
\maketitle

\begin{abstract}
In this paper, anti-unwinding attitude maneuver control for rigid spacecraft
is considered. First, in order to avoid the unwinding phenomenon when the
system states are restricted to the switching surface, a novel switching
function is constructed by hyperbolic sine functions such that the switching
surface contains two equilibriums. Then, a sliding mode attitude maneuver
controller is designed based on the constructed switching function to ensure
the robustness of the closed-loop attitude maneuver control system to
disturbance. Another important feature of the developed attitude control law
is that a dynamic parameter is introduced to guarantee the anti-unwinding
performance before the system states reach the switching surface. The
simulation results demonstrate that the unwinding problem is settled during
attitude maneuver for rigid spacecraft by adopting the newly constructed
switching function and proposed attitude control scheme.
\end{abstract}

\begin{IEEEkeywords}
Unwinding phenomenon, sliding mode control, attitude maneuver, rigid spacecraft
\end{IEEEkeywords}

\section{Introduction}

Due to the increasingly challenging requirements of the aerospace control
tasks such as high pointing accuracy, fast response and strong robustness,
the attitude controller design for a spacecraft has been a hot topic. Then,
various control schemes have been proposed to deal with the attitude control
issue, such as Proportional-Integral-Differential (PID) control law~\cite%
{li2016robust}, Linear Parameter Varying (LPV) gain-scheduled controller~%
\cite{jin2018lpv}, fuzzy control method~\cite{chen2000adaptive},
velocity-free approach~\cite{guo2019velocity}, robust $H_{\infty }$ control
technique~\cite{gao2012robust}, and so on. Despite all of these efforts, the
attitude maneuver control of rigid spacecraft is still challenging.

Sliding mode control (SMC) is a nonlinear control technique that alters the
dynamics of a nonlinear system by application of a discontinuous control law
(or more rigorously, a set-valued control signal) that forces the system to
"slide" along a cross-section of the system's normal behavior. Such a
control technique was first proposed in~\cite{utkin1977variable} for
variable structure systems. Subsequently, it has attracted much attention in
handling spacecraft attitude control design because of its strong robustness~%
\cite{dodds1991sliding,bang2005flexible,hu2006control,bang2005flexible}. In~%
\cite{dodds1991sliding}, an SMC scheme was developed for a three-axis
attitude control of rigid spacecraft with unknown dynamic parameters. Then,
the SMC control strategy for the pure rigid spacecraft was extended to a
flexible spacecraft, and an SMC strategy for the flexible spacecraft
attitude maneuver was proposed in~\cite{bang2005flexible}. In~\cite%
{hu2006control}, an SMC output feedback control law was presented to solve
the attitude stabilization problem for the flexible spacecraft with
uncertainty, disturbances, and control input nonlinearities. In~\cite%
{hu2006adaptive}, an SMC controller was derived for the attitude maneuver
problem of a flexible spacecraft under control input nonlinearities, and
only the attitude and angular rate information were used. Due to the
existence of the sign function, the traditional SMC controllers suffer from
the chattering problem. In order to alleviate such undesirable performance,
the sign function was approximated by a saturation function~\cite%
{chen1993sliding,zong2010higher}.
In~\cite{tiwari2016attitude}, a higher-order integral SMC method was
presented for the attitude control of rigid spacecraft, which was free of
chattering because the nonlinear term was introduced into the first
derivative of the control input.
It should be pointed out that the aforementioned control methods were
developed based on a linear sliding surface, and thus the system states
reach their equilibrium point in infinite time rather than finite time~\cite%
{li2006global}. Recently, a terminal sliding mode control methodology has
been proposed, in which a nonlinear sliding surface was synthesized to
achieve the finite-time control performance~\cite%
{wu2011quaternion,wu2018adaptive}. In~\cite{wu2011quaternion}, the developed
finite-time SMC law can guarantee the convergence of attitude tracking
errors in finite time. 
In~\cite{wu2018adaptive}, by constructing a nonlinear sliding surface,
adaptive finite-time SMC algorithms were presented to stabilize the flexible
spacecraft attitude. 
In addition, the SMC technique was also combined with other control methods
to obtain enhanced control performance for spacecraft attitude control, such
as backstepping method~\cite{ji2018vibration}, adaptive control~\cite%
{guo2014adaptive}.

A typical feature in most of the control approaches mentioned above for
spacecraft is that the unwinding issue was ignored when the spacecraft
attitude is described by quaternions. The quaternion has a double value
property, and thus there are two mathematical representations for a given
physical attitude of a rigid body~\cite{hu2015spacecraft}. Accordingly,
there are two equilibriums $\left[1,\ 0,\ 0,\ 0\right]^{\mathrm{T}}$ and $%
\left[-1,\ 0,\ 0,\ 0\right]^{\mathrm{T}}$. However, in conventional control
law design, only one equilibrium is considered.
In this case, the system states have to move to the considered equilibrium,
even if they are very close to another equilibrium.
This is called the unwinding phenomenon, which may cause a spacecraft to
perform an unnecessary large-angle maneuver when a small-angle maneuver 
is sufficient to achieve the control objective. To the best knowledge of the
author, there are little research about the unwinding issue for spacecraft.
In~\cite{tiwari2018spacecraft}, the term $\mathrm{sign}\left(q_{0}(0)\right)
$ was introduced into the sliding surface to avoid unwinding phenomenon. In
\cite{kristiansen2005satellite}, a new attitude error function $%
\left(1-|q_{0}|\right)$ was constructed to design attitude controller, which
considers two equilibrium $\left[1,\ 0,\ 0,\ 0\right]^{\mathrm{T}}$ and $%
\left[-1,\ 0,\ 0,\ 0\right]^{\mathrm{T}}$. But the strict proof of how the
designed control laws avoid unwinding phenomenon was not given.

In this paper, the unwinding phenomenon is taken into account, and an
anti-unwinding sliding mode attitude maneuver control law for rigid
spacecraft is presented. The main contribution of this work can be
summarized as follows. First of all, a novel switching function that
contains two equilibrium points is developed. Moreover, the anti-unwinding
performance is proven when the system states are on the switching surface by
constructing a Lyapunov function. This Lyapunov function is constructed by a
hyperbolic cosine function. Secondly, a sliding mode control law is designed
to guarantee that all the system trajectories are attracted by the switching
surface. Further, the anti-unwinding performance is proven by designing a
dynamic parameter for the sliding mode control law.

This paper proceeds as follows. In Section II, an attitude maneuver control
problem of rigid spacecraft is stated. In Section III, a novel switching
function is first constructed, and the property of the switching surface is
analyzed. Furthermore, an anti-unwinding sliding mode controller is
presented, and its anti-unwinding performance is proven. In Section V,
comparing simulations are conducted to demonstrate the efficiency of the
proposed attitude maneuver controller.

Throughout this paper, we use the italic-font notation for a scalar variable
(as $\alpha$), the bold-font notation for a vector (as $\boldsymbol{v}$),
and the capital-letter notation for a matrix (as $M$). The set of $n$%
-dimensional real vectors and the set of $m$-by-$n$ real matrices are
denoted by $\mathbb{R}^{n}$ and $\mathbb{R}^{m\times n}$, respectively. We
use $\left\Vert \cdot \right\Vert $ to represent the $2$-norm of a vector, $%
\lambda _{\min }\left( \cdot\right)$ and $\lambda _{\max }\left( \cdot\right)
$ to represent the minimum and maximum eigenvalues of a matrix,
respectively. 
In addition, the following two hyperbolic functions are used, $\sinh x=\frac{%
e^{x}-e^{-x}}{2},\ \cosh x=\frac{e^{x}+e^{-x}}{2}.$ 
Moreover, the following derivatives are used, $\frac{\mathrm{d}\left(\sinh
x\right)}{\mathrm{d}x}=\cosh x,\ \frac{\mathrm{d}\left(\cosh x\right)}{%
\mathrm{d}x}=\sinh x,\ \frac{\mathrm{d}\left( \arccos x\right) }{\mathrm{d}t}%
=-\frac{x}{\sqrt{1-{x}^{2}}},\ x\in \mathbb{R}$, respectively.

\section{Attitude Maneuver Control Problem Formulation for a Rigid Spacecraft%
}


In this paper, we aim to design an attitude maneuver controller to rotate
the rigid spacecraft from the body frame $\mathcal{F}_{\mathrm{b}}$ to the
desired frame $\mathcal{F}_{\mathrm{d}}$. For this end, the attitude
dynamics of the body frame $\mathcal{F}_{\mathrm{b}}$ is given in the
subsequent section. 

\subsection{Rigid Spacecraft Attitude Kinematics and Dynamics}

The quaternion based kinematic and dynamic equations of a rigid spacecraft
can be given by~\cite{chen1993sliding} 
\begin{equation}
\left\{
\begin{tabular}{l}
$\boldsymbol{\dot{q}}=\dfrac{1}{2}\left[
\begin{array}{c}
-\boldsymbol{q}_{\mathrm{v}}^{\mathrm{T}} \\
q_{0}I_{3}+\boldsymbol{q}_{\mathrm{v}}^{\times }%
\end{array}%
\right] \boldsymbol{\omega },$ \\
\makecell[{ll}]{$J \boldsymbol{\dot{\omega}}=-\boldsymbol{\omega }^{\times }
J\boldsymbol{\omega }+\boldsymbol{u}+\boldsymbol{d},$}%
\end{tabular}%
\right.  \label{dynamic equaiton1}
\end{equation}%
where unit quaternion $\boldsymbol{q}=\left[ q_{0}\ \boldsymbol{q}_{\mathrm{v%
}}^{\mathrm{T}} \right] ^{\mathrm{T}}\in \mathbb{R}\times \mathbb{R}^{3}$
represents the attitude of body frame $\mathcal{F}_{\mathrm{b}}$ with
respect to inertia frame $\mathcal{F}_{\mathrm{I}}$, $\boldsymbol{\omega }%
\in \mathbb{R}^{3}$ denotes the angular velocity of body frame $\mathcal{F}_{%
\mathrm{b}}$ with respect to inertia frame $\mathcal{F}_{\mathrm{I}}$;
$J\in \mathbb{R}^{3\times 3}$ is the inertia matrix (symmetric) of the whole
rigid spacecraft, $\boldsymbol{u}$ is the external torque acting on the main
body, and $\boldsymbol{d}$ is the external disturbance. In addition, for any
vector $\boldsymbol{x}\in \mathbb{R}^{3}$, $\boldsymbol{x}^{\times }$
represents a skew-symmetric matrix which can be given by
\begin{equation*}
\boldsymbol{x}^{\times }:=\left[
\begin{array}{ccc}
0 & -x_{3} & x_{2} \\
x_{3} & 0 & -x_{1} \\
-x_{2} & x_{1} & 0%
\end{array}%
\right] .
\end{equation*}



Based on the attitude dynamics of the body frame $\mathcal{F}_{\mathrm{b}}$,
the error kinematics and dynamics between the body frame $\mathcal{F}_{%
\mathrm{b}}$ and the desired frame $\mathcal{F}_{\mathrm{d}}$ are given in
the next section.

\subsection{Relative Attitude Error Kinematics and Dynamics}

\subsubsection{Attitude Error Kinematics}

Let unit quaternion $\boldsymbol{q}_{\mathrm{d}}:=\left[ q_{\mathrm{d}0}\
\boldsymbol{q}_{\mathrm{dv}}^{\mathrm{T}}\right] ^{\mathrm{T}}\in \mathbb{R}%
\times \mathbb{R}^{3}$ represents the rigid spacecraft attitude of desired
frame $\mathcal{F}_{\mathrm{d}}$ with respect to inertia frame $\mathcal{F}_{%
\mathrm{I}}$. Let $\boldsymbol{\omega }_{\mathrm{d}}\in \mathbb{R}^{3}$
denotes the rigid spacecraft angular velocity of $\mathcal{F}_{\mathrm{b}}$
with respect to $\mathcal{F}_{\mathrm{I}}$ and is expressed in $\mathcal{F}_{%
\mathrm{b}}$. The attitude error $\boldsymbol{q}_{\mathrm{e}}:=\left[ q_{%
\mathrm{e}0}\ \boldsymbol{q}_{\mathrm{ev}}^{\mathrm{T}}\right] ^{\mathrm{T}%
}\in \mathbb{R}\times \mathbb{R}^{3}$ can be given by
\begin{equation}
\boldsymbol{q}_{\mathrm{e}}=\boldsymbol{q}_{\mathrm{d}}^{\boldsymbol{\ast }%
}\otimes \boldsymbol{q},  \label{errorq}
\end{equation}%
where $\boldsymbol{q}_{\mathrm{d}}^{\boldsymbol{\ast }}:=\left[ \boldsymbol{q%
}_{\mathrm{d}0}\ -\boldsymbol{q}_{\mathrm{dv}}^{\mathrm{T}}\right] ^{\mathrm{%
T}}$,
and $\otimes $ is the quaternion multiplication operator. 
Then, the components of the error quaternion $\boldsymbol{q}_{\text{\textrm{e%
}}}$ can be obtained from (\ref{errorq}),
\begin{align}
q_{\mathrm{e}0}=& \boldsymbol{q}_{\mathrm{dv}}^{\mathrm{T}}\boldsymbol{q}_{%
\mathrm{v}}+q_{\mathrm{d}0}q_{0},  \label{multi} \\
\boldsymbol{q}_{\mathrm{ev}}=& q_{\mathrm{d}0}\boldsymbol{q}_{\mathrm{v}}-%
\boldsymbol{q}_{\mathrm{dv}}^{\mathrm{\times }}\boldsymbol{q}_{\mathrm{v}%
}-q_{0}\boldsymbol{q}_{\mathrm{dv}}.  \notag
\end{align}%
Moreover, it can be derived from (\ref{multi}) that
\begin{equation}
q_{\mathrm{e}0}^{2}+\boldsymbol{q}_{\mathrm{ev}}^{\mathrm{T}}\boldsymbol{q}_{%
\mathrm{ev}}=1.  \label{unitquaternion}
\end{equation}%
By taking derivative for~(\ref{multi}), the following attitude error
kinematics 
can be obtained as
\begin{equation}
\dot{\boldsymbol{q}}_{\mathrm{e}}=\dfrac{1}{2}\left[
\begin{array}{c}
-\boldsymbol{q}_{\mathrm{ev}}^{\mathrm{T}} \\
q_{\mathrm{e}0}I_{3}+\boldsymbol{q}_{\mathrm{ev}}^{\times }%
\end{array}%
\right] \boldsymbol{\omega }_{\mathrm{e}},  \label{kinematic1}
\end{equation}%
where $\boldsymbol{\omega }_{\mathrm{e}}\in \mathbb{R}^{3}$ represents the
angular velocity error, and is defined as%
\begin{equation}
\boldsymbol{\omega }_{\mathrm{e}}:=\boldsymbol{\omega }-R\boldsymbol{\omega }%
_{\mathrm{d}},  \label{transformation1}
\end{equation}%
with $R$ being the relative rotation matrix from $\mathcal{F}_{\mathrm{b}}$
to $\mathcal{F}_{\mathrm{d}}$, which is given by
\begin{equation*}
R:=\left( q_{\mathrm{e}0}^{2}-\boldsymbol{q}_{\mathrm{ev}}^{\mathrm{T}}%
\boldsymbol{q}_{\mathrm{ev}}\right) I_{3}+2\boldsymbol{q}_{\mathrm{ev}}%
\boldsymbol{q}_{\mathrm{ev}}^{\mathrm{T}}-2q_{\mathrm{e}0}\boldsymbol{q}_{%
\mathrm{ev}}^{\times }.
\end{equation*}%
The rotation matrix $R$ satisfies 
$\dot{R}=-\boldsymbol{\omega }_{\mathrm{e}}^{\times }R$. Furthermore, it can
be obtained from~(\ref{transformation1}) that
\begin{equation}
\dot{\boldsymbol{\omega}}_{\mathrm{e}}=\dot{\boldsymbol{\omega}}+\boldsymbol{%
\omega }_{\mathrm{e}}^{\times }R\boldsymbol{\omega }_{\mathrm{d}}-R\dot{%
\boldsymbol{\omega}}_{\mathrm{d}}.  \label{intermediate}
\end{equation}

\subsubsection{Attitude Error Dynamics}

For a rest-to-rest attitude maneuver control problem, the desired attitude
velocity satisfies $\boldsymbol{\omega }_{\mathrm{d}}=0,\ \dot{\boldsymbol{%
\omega}}_{\mathrm{d}}=0$. Thus, it can be obtained from~(\ref%
{transformation1}) that~$\boldsymbol{\omega }_{\mathrm{e}}=\boldsymbol{%
\omega }$ holds. With this in mind,
by substituting~(\ref{transformation1}) and~(\ref{intermediate}) into the
second equation of~(\ref{dynamic equaiton1}), the following attitude error
dynamic equation can be obtained,%
\begin{equation}
J\dot{\boldsymbol{\omega}}_{\mathrm{e}}=-\boldsymbol{\omega }_{\mathrm{e}%
}^{\times }J\boldsymbol{\omega }_{\mathrm{e}}+\boldsymbol{u}+\boldsymbol{d}.
\label{dynamicequation2}
\end{equation}
Then, by (\ref{kinematic1}) and (\ref{dynamicequation2}), the attitude error
dynamics for the rigid spacecraft can be obtained as~\cite{chen1993sliding}
\begin{equation}
\left\{
\begin{tabular}{l}
$\dot{\boldsymbol{q}}_{\mathrm{e}}=\dfrac{1}{2}\left[
\begin{array}{c}
-\boldsymbol{q}_{\mathrm{ev}}^{\mathrm{T}} \\
q_{\mathrm{e}0}I_{3}+\boldsymbol{q}_{\mathrm{ev}}^{\times }%
\end{array}%
\right] \boldsymbol{\omega }_{\mathrm{e}},$ \\
$J\dot{\boldsymbol{\omega}}_{\mathrm{e}}=-\boldsymbol{\omega }_{\mathrm{e}%
}^{\times }J\boldsymbol{\omega }_{\mathrm{e}}+\boldsymbol{u}+\boldsymbol{d}.$%
\end{tabular}%
\right.  \label{system model}
\end{equation}

In addition, the error quaternion can also be written as~\cite{di2003output}%
,
\begin{equation}
\boldsymbol{q}_{\mathrm{e}}=\left[
\begin{array}{c}
q_{\mathrm{e}0} \\
\boldsymbol{q}_{\mathrm{ev}}%
\end{array}%
\right] =\left[
\begin{array}{c}
\cos \frac{\theta \left( t\right) }{2} \\
\boldsymbol{e}\sin \frac{\theta \left( t\right) }{2}%
\end{array}%
\right] ,  \label{unit attitude error}
\end{equation}%
where $\theta \left( t\right) \in \left[ 0,2\pi \right] $ is the rotation
angle and $\boldsymbol{e}\in \mathbb{R}^{3}$ is the fixed Euler axis. Then,
the following relation can be obtained by the first relation of~(\ref{unit
attitude error}),
\begin{equation}
\theta \left( t\right) =2\arccos q_{\mathrm{e}0}.  \label{theta}
\end{equation}%
It follows from the first relation of~(\ref{system model}) and~(\ref%
{unitquaternion}) that
\begin{align}
\dot{\theta}\left( t\right) & =-\frac{2\dot{q}_{\mathrm{e}0}}{\sqrt{1-{q}_{%
\mathrm{e}0}^{2}}}  \notag \\
& =\frac{\boldsymbol{q}_{\mathrm{ev}}^{\mathrm{T}}\boldsymbol{\omega }_{%
\mathrm{e}}}{\sqrt{1-{q}_{\mathrm{e}0}^{2}}}  \notag \\
& =\frac{\boldsymbol{q}_{\mathrm{ev}}^{\mathrm{T}}}{\left\Vert \boldsymbol{q}%
_{\mathrm{ev}}\right\Vert }\boldsymbol{\omega }_{\mathrm{e}}.
\label{angularEuler}
\end{align}

According to~(\ref{multi}), $q_{\mathrm{e}0}\left( 0\right) $ and $%
\boldsymbol{q}_{\mathrm{ev}}\left( 0\right) $ can be obtained as long as the
initial attitude $\boldsymbol{q}\left( 0\right) $ of $\boldsymbol{q}$ and
the desired attitude $\boldsymbol{q}_{\mathrm{d}}$ are given. Further, the
initial value $\theta\left(0\right)$ of $\theta\left(t\right)$ can be
obtained by~(\ref{theta}). By designing an attitude maneuver controller, the
rigid spacecraft is driven to rotate about the fixed Euler axis $\boldsymbol{%
e}$, such that the rotation angle $\theta\left(t\right)$ converges from the
initial value $\theta\left(0\right)$ to the equilibrium point.

\subsection{Unwinding Phenomenon}

It can be obtained from~(\ref{unit attitude error}) that $q_{\mathrm{e0}%
}|_{\theta \left( t\right) =0}=1$ and $q_{\mathrm{e0}}|_{\theta \left(
t\right) =2\pi }=-1$, while $\theta \left( t\right) =0$ and $\theta \left(
t\right) =2\pi $ represent the same position. Thus, $\boldsymbol{q}_{\mathrm{%
e}0}\!=1$ and $\boldsymbol{q}_{\mathrm{e}0}= -1$ are both the equilibrium
point of the attitude error dynamics~(\ref{system model}) for a rigid
spacecraft.
However, in most existing controller design approaches, only $\boldsymbol{q}%
_{\mathrm{e}0}=1$ is considered as equilibrium point. In this case, when the
initial value of $q_{\mathrm{e0}}$ is less than $0$, the designed controller
drives $q_{\mathrm{e0}}$ to $0$, and finally to $1$. This means that the
rigid spacecraft needs to rotate a Euler angle $\theta \left( t\right) $
larger than $\pi $. This is the "unwinding phenomenon". However, the rigid
spacecraft can reach the desired attitude by rotating an angle smaller than $%
\pi $. 

\subsection{Control Objective}

\label{controobjective}

The control task in this work is to design an anti-unwinding attitude
controller to accomplish a rest-to-rest attitude maneuver for the rigid
spacecraft system~(\ref{dynamic equaiton1}). By adopting the designed
control law for the closed-loop attitude maneuver error dynamics~(\ref%
{system model}) of a rigid spacecraft, the following relations are achieved,
\begin{equation}
\mathrm{lim}_{t\rightarrow \infty}q_{\mathrm{e}0}=1\ \mathrm{or} -1,\
\mathrm{lim}_{t\rightarrow \infty }\boldsymbol{\omega}_{\mathrm{e}}=0.
\label{aim}
\end{equation}
Moreover, the unwinding phenomenon is avoided during the rigid spacecraft
maneuver.

\section{Controller Design}

\label{controllerdesign}

In this section, we aim to design an anti-unwinding sliding mode control law
to accomplish the control objective stated in Section~\ref{controobjective}.
First, a new switching surface is constructed in Section~\ref%
{slidingsurfacesec}, which considers both $\boldsymbol{q}_{\mathrm{e}0}=1$
and $\boldsymbol{q}_{\mathrm{e}0}=-1$ to be equilibrium points. Then, the
anti-unwinding performance when the system states are on the switching
surface is proven. In section~\ref{equivalentsec}, an anti-unwinding sliding
mode attitude control law is derived based on the constructed switching
function. In addition, a dynamic parameter is introduced to guarantee the
anti-unwinding performance when the system states are outside the switching
surface.

Before preceding, we first give the following lemmas.

\begin{lemma}
\cite{Lemmastability} \label{lemma1}
Suppose $V(x)$ is a $C^{1}$ smooth positive-definite
function (defined on $U\subset \mathbb{R}^{n}$) and $\dot{V}(x)+\lambda V^{\alpha
}(x) $ is a negative semi-definite function on $U\subset \mathbb{R}^{n}$ for $\alpha
\in (0,1)$ and $\lambda \in \mathbb{R}^{+}$, then there exists an area $U_{0}\subset
\mathbb{R}^{n} $ such that any $V(x)$ which starts from $U_{0}\subset \mathbb{R}^{n}$ can
reach $V(x)\equiv 0$ in finite time. Moreover, if $T_{\mathrm{s}}$ is the
time needed to reach $V(x)\equiv 0$, then
\begin{equation*}
T_{\mathrm{s}}\leq \frac{V^{1-\alpha }(x_{0})}{\lambda \left( 1-\alpha
\right) },
\end{equation*}%
where $V(x_{0})$ is the initial value of $V(x)$.
\end{lemma}

\begin{lemma}
\label{idenpotent} For any unit vector $\boldsymbol{x}\in \mathbb{R}^{n},$
the matrix $A=\boldsymbol{xx}^{\mathrm{T}}$ is a $n\times n$ idempotent
matrix.
\end{lemma}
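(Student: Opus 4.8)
The plan is to verify the defining property of an idempotent matrix directly, namely that $A^{2}=A$. First I would record the dimension claim, which is immediate: since $\boldsymbol{x}\in \mathbb{R}^{n}$ is a column vector of size $n\times 1$, its transpose $\boldsymbol{x}^{\mathrm{T}}$ has size $1\times n$, so the outer product $A=\boldsymbol{x}\boldsymbol{x}^{\mathrm{T}}$ is an $n\times n$ matrix.

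Next I would compute $A^{2}$ by exploiting associativity of matrix multiplication. Writing
\begin{equation*}
A^{2}=\left( \boldsymbol{x}\boldsymbol{x}^{\mathrm{T}}\right) \left( \boldsymbol{x}\boldsymbol{x}^{\mathrm{T}}\right) =\boldsymbol{x}\left( \boldsymbol{x}^{\mathrm{T}}\boldsymbol{x}\right) \boldsymbol{x}^{\mathrm{T}},
\end{equation*}
the inner factor $\boldsymbol{x}^{\mathrm{T}}\boldsymbol{x}$ is a scalar equal to $\left\Vert \boldsymbol{x}\right\Vert ^{2}$. Because $\boldsymbol{x}$ is a unit vector, this scalar equals $1$, and therefore $A^{2}=\boldsymbol{x}\boldsymbol{x}^{\mathrm{T}}=A$, which is exactly the idempotency condition.

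Since the argument reduces to a single associativity rewrite together with the normalization $\left\Vert \boldsymbol{x}\right\Vert =1$, there is no genuine obstacle here. The only point requiring care is keeping the grouping of the triple product correct, so that the scalar $\boldsymbol{x}^{\mathrm{T}}\boldsymbol{x}$ is the factor pulled out, rather than mistakenly associating the product as the rank-one matrix $\boldsymbol{x}\boldsymbol{x}^{\mathrm{T}}$ times a vector.
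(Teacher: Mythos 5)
Your proof is correct and follows essentially the same route as the paper's: compute $AA=\boldsymbol{x}\boldsymbol{x}^{\mathrm{T}}\boldsymbol{x}\boldsymbol{x}^{\mathrm{T}}$ and use $\boldsymbol{x}^{\mathrm{T}}\boldsymbol{x}=\left\Vert \boldsymbol{x}\right\Vert ^{2}=1$ to collapse it to $\boldsymbol{x}\boldsymbol{x}^{\mathrm{T}}=A$. You merely make explicit the associativity step and the dimension count that the paper leaves implicit.
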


\begin{proof}
Note that $A=\boldsymbol{xx}^{\mathrm{T}}$ and $\boldsymbol{x}$ is a unit
vector, thus
\begin{align*}
AA &=\boldsymbol{xx}^{\mathrm{T}}\boldsymbol{xx}^{\mathrm{T}} \\
&=\boldsymbol{xx}^{\mathrm{T}} \\
&=A.
\end{align*}%
This implies that the matrix $A$ is an idempotent matrix.
\end{proof}

\begin{lemma}\label{eigen}
For any idempotent matrix $A\in \mathbb{R}^{n\times n}$, its eigenvalues are $1$ or $0$.
\end{lemma}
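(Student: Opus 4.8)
The plan is to argue directly from the defining relation $A^{2}=A$ of an idempotent matrix, applied to an arbitrary eigenpair. First I would let $\lambda$ be any eigenvalue of $A$ and let $\boldsymbol{v}\neq \boldsymbol{0}$ be an associated eigenvector, so that $A\boldsymbol{v}=\lambda \boldsymbol{v}$. The key idea is that idempotency forces a polynomial constraint on $\lambda$ once we evaluate $A^{2}\boldsymbol{v}$ in two different ways and compare.

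Next I would compute $A^{2}\boldsymbol{v}$ by iterating the eigenvalue relation, namely $A^{2}\boldsymbol{v}=A(\lambda \boldsymbol{v})=\lambda A\boldsymbol{v}=\lambda^{2}\boldsymbol{v}$. On the other hand, since $A$ is idempotent we also have $A^{2}\boldsymbol{v}=A\boldsymbol{v}=\lambda \boldsymbol{v}$. Equating the two expressions yields $\lambda^{2}\boldsymbol{v}=\lambda \boldsymbol{v}$, that is, $(\lambda^{2}-\lambda)\boldsymbol{v}=\boldsymbol{0}$.

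Finally, because $\boldsymbol{v}\neq \boldsymbol{0}$, the scalar coefficient must vanish, so $\lambda^{2}-\lambda =\lambda(\lambda -1)=0$, whence $\lambda =0$ or $\lambda =1$. Since $\lambda$ was an arbitrary eigenvalue, every eigenvalue of $A$ lies in $\left\{0,1\right\}$, which is exactly the claim.

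There is essentially no hard step here: the result is a two-line consequence of $A^{2}=A$, so I do not expect a genuine obstacle. The only points to keep honest are that the cancellation of $\boldsymbol{v}$ is justified solely by $\boldsymbol{v}\neq \boldsymbol{0}$ (an eigenvector is nonzero by definition), and that the conclusion holds even if one allows $\lambda$ to be complex \emph{a priori}, since the derived identity $\lambda^{2}=\lambda$ already pins $\lambda$ down to the values $0$ and $1$. In particular, no appeal to diagonalizability or to a spectral theorem is needed.
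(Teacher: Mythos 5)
Your proof is correct and follows essentially the same route as the paper's: apply $A$ to the eigenvalue relation, use $A^{2}=A$ to evaluate $A^{2}\boldsymbol{v}$ in two ways, and cancel the nonzero eigenvector to get $\lambda^{2}=\lambda$. If anything, your write-up is slightly cleaner, since the paper needlessly restricts attention to a non-zero eigenvalue at the outset, while your argument handles all eigenvalues (including complex ones) uniformly.
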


\begin{proof}
Suppose that the non-zero vector $\boldsymbol{y}\in \mathbb{R}^{n}$ is an
eigenvector corresponding to a non-zero eigenvalue $\lambda $ of the matrix $%
A$. Then, we have
\begin{equation}
A\boldsymbol{y}=\lambda \boldsymbol{y}\text{.}  \label{A1}
\end{equation}%
Multiplying $A$ of both sides of the above relation, gives%
\begin{equation}
AA\boldsymbol{y}=\lambda A\boldsymbol{y}\text{.}  \label{A2}
\end{equation}%
Because $A$ is an idempotent matrix, thus the left side of (\ref{A2}) can be
rewritten as $A\boldsymbol{y}$. In addition, by (\ref{A1}), the right side
of (\ref{A2}) can be rewritten as $\lambda ^{2}\boldsymbol{y}$. Then, there
holds
\begin{equation}
A\boldsymbol{y}=\lambda ^{2}\boldsymbol{y}.  \label{A3}
\end{equation}
Combining (\ref{A1}) with (\ref{A3}), yields
\begin{equation*}
\lambda \boldsymbol{y}=\lambda ^{2}\boldsymbol{y}\text{.}
\end{equation*}%
Because the vector $\boldsymbol{y}$ is a non-zero vector, then there holds $%
\lambda =1$ or $0$. Thus, the proof is completed.
\end{proof}

\subsection{Switching Surface}

\label{slidingsurfacesec} For the attitude error dynamics~(\ref{system model}%
) of a rigid spacecraft,
we design the following switching function,%
\begin{equation}
\boldsymbol{s}=\boldsymbol{\omega }_{\mathrm{e}}+\lambda \boldsymbol{\sigma }%
,  \label{switching surface}
\end{equation}%
where $\lambda $ is a positive constant, and
\begin{equation}
\boldsymbol{\sigma }:=\sinh \left( q_{\mathrm{e}0}\right) \boldsymbol{q}_{%
\mathrm{ev}}.  \label{e_r1}
\end{equation}
\quad Next, the fact that the switching surface $\boldsymbol{s}=0$
containing two equilibriums $\boldsymbol{q}_{\mathrm{e}0}=1$ and $-1$ is
proven. In addition, the convergence performance of the attitude error
variables $\boldsymbol{\omega }_{\mathrm{e}}$ and $\boldsymbol{q}_{\mathrm{ev%
}}$ on the switching surface $\boldsymbol{s}=0$ is analyzed. In addition,
the anti-unwinding performance of the designed switching function~(\ref%
{switching surface}) in the sliding phase is demonstrated.

Before given the theorem, we should give some properties of the functions $%
\cosh q_{\mathrm{e}0}$ and $\sinh \cos \frac{\theta \left( t\right) }{2}$.
The maximum value of the function $\cosh q_{\mathrm{e}0}$ can be obtained
when $q_{\mathrm{e}0}=1$ and $q_{\mathrm{e}0}=-1$. For $\theta \left(
t\right) \in \left( 0,\pi \right] $, $\sinh \cos \frac{\theta \left(
t\right) }{2}\geq 0$, and for $\theta \left( t\right) \in \left( \pi ,2\pi
\right) $, $\sinh \cos \frac{\theta \left( t\right) }{2}\leq 0$.

\begin{theorem}\label{aucoss}
If the system states of the attitude error dynamics~(\ref{system model}) are restricted to the switching surface $\boldsymbol{s}%
=0$, the following conclusions are achieved:

(i) The switching surface $\boldsymbol{s}%
=0$ contains two equilibriums $\boldsymbol{q}_{\mathrm{e}0}=1$ and $-1$, and the control goal in~(\ref{aim}) is guaranteed.

(ii) The unwinding
phenomenon is avoided in the sliding phase.%
\end{theorem}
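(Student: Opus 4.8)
The plan is to reduce the closed-loop motion on the switching surface to a single scalar equation in $q_{\mathrm{e}0}$, and then to treat part~(i) with a hyperbolic-cosine Lyapunov function and part~(ii) with the rotation angle $\theta(t)$. First I would impose $\boldsymbol{s}=0$, which by~(\ref{switching surface}) and~(\ref{e_r1}) gives the reduced kinematic law $\boldsymbol{\omega}_{\mathrm{e}}=-\lambda\sinh(q_{\mathrm{e}0})\boldsymbol{q}_{\mathrm{ev}}$. Substituting this into the scalar part $\dot{q}_{\mathrm{e}0}=-\frac{1}{2}\boldsymbol{q}_{\mathrm{ev}}^{\mathrm{T}}\boldsymbol{\omega}_{\mathrm{e}}$ of~(\ref{system model}) and using~(\ref{unitquaternion}) yields the one-dimensional flow $\dot{q}_{\mathrm{e}0}=\frac{\lambda}{2}\sinh(q_{\mathrm{e}0})(1-q_{\mathrm{e}0}^{2})$. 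Evaluating the surface law at $q_{\mathrm{e}0}=\pm 1$, where~(\ref{unitquaternion}) forces $\boldsymbol{q}_{\mathrm{ev}}=0$, gives $\boldsymbol{\omega}_{\mathrm{e}}=0$; this confirms that both $q_{\mathrm{e}0}=1$ and $q_{\mathrm{e}0}=-1$ are equilibriums lying on $\boldsymbol{s}=0$.

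For the convergence claim in~(i) I would take $V=\cosh 1-\cosh(q_{\mathrm{e}0})$, which is nonnegative for $q_{\mathrm{e}0}\in[-1,1]$ because $\cosh$ attains its maximum on this interval exactly at $q_{\mathrm{e}0}=\pm 1$ and vanishes only there. Differentiating along the reduced flow gives $\dot{V}=-\frac{\lambda}{2}\sinh^{2}(q_{\mathrm{e}0})(1-q_{\mathrm{e}0}^{2})\le 0$, so $V$ is nonincreasing and, the state being confined to the compact unit sphere, an invariance argument drives the trajectory into $\{\dot{V}=0\}$. Since $\boldsymbol{\omega}_{\mathrm{e}}=-\lambda\sinh(q_{\mathrm{e}0})\boldsymbol{q}_{\mathrm{ev}}$, convergence of $q_{\mathrm{e}0}$ to $\pm 1$ then forces $\boldsymbol{q}_{\mathrm{ev}}\to 0$ and $\boldsymbol{\omega}_{\mathrm{e}}\to 0$, i.e.\ the goal~(\ref{aim}).

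The hard part is that $\dot{V}$ also vanishes at the spurious point $q_{\mathrm{e}0}=0$, so a direct LaSalle argument cannot by itself exclude it. To discard this point I would read off from $\dot{q}_{\mathrm{e}0}=\frac{\lambda}{2}\sinh(q_{\mathrm{e}0})(1-q_{\mathrm{e}0}^{2})$ that $\dot{q}_{\mathrm{e}0}$ has the same sign as $q_{\mathrm{e}0}$ whenever $|q_{\mathrm{e}0}|<1$; hence $q_{\mathrm{e}0}=0$ is a repeller, $|q_{\mathrm{e}0}|$ increases monotonically toward $1$, and every trajectory with $q_{\mathrm{e}0}(0)\ne 0$ converges to the equilibrium carrying the sign of $q_{\mathrm{e}0}(0)$.

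This sign-preservation property also delivers~(ii). Passing to $\theta(t)=2\arccos q_{\mathrm{e}0}$ and using~(\ref{angularEuler}) with the surface law gives $\dot\theta=-\lambda\sinh(\cos\frac{\theta}{2})\|\boldsymbol{q}_{\mathrm{ev}}\|$. By the sign behaviour of $\sinh\cos\frac{\theta}{2}$ recorded before the theorem, $\dot\theta\le 0$ for $\theta\in(0,\pi]$ and $\dot\theta\ge 0$ for $\theta\in(\pi,2\pi)$, so $\theta$ decreases to $0$ when it starts below $\pi$ and increases to $2\pi$ (the same physical attitude as $0$) when it starts above $\pi$. In either case the executed rotation is smaller than $\pi$, so the spacecraft never takes the long way around and the unwinding phenomenon is avoided.
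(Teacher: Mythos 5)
Your proof is correct, and its skeleton coincides with the paper's own: the same hyperbolic-cosine Lyapunov function (the paper takes $V_{1}=2\left(\kappa-\cosh q_{\mathrm{e}0}\right)$ with $\kappa=\cosh 1$; yours is half of it) and the same sign facts about $\sinh\cos\frac{\theta(t)}{2}$ for part (ii). The genuine difference is that you first collapse the on-surface dynamics to the scalar ODE $\dot{q}_{\mathrm{e}0}=\frac{\lambda}{2}\sinh\left(q_{\mathrm{e}0}\right)\left(1-q_{\mathrm{e}0}^{2}\right)$, and this buys you two things the paper's proof lacks. First, it makes the convergence claim in (i) rigorous: the paper only observes that $\dot{V}_{1}=-\lambda\boldsymbol{\sigma}^{\mathrm{T}}\boldsymbol{\sigma}\leq 0$ vanishes when $q_{\mathrm{e}0}=0$ or $\boldsymbol{q}_{\mathrm{ev}}=0$, and that $V_{1}$ is not minimized at $q_{\mathrm{e}0}=0$; by itself this does not exclude trajectories stalling on the spurious invariant set $q_{\mathrm{e}0}=0$ (there, on the surface, $\boldsymbol{\sigma}=0$ forces $\boldsymbol{\omega}_{\mathrm{e}}=0$ and the state freezes). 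Your observation that $\dot{q}_{\mathrm{e}0}$ carries the sign of $q_{\mathrm{e}0}$ on $|q_{\mathrm{e}0}|<1$ identifies that set as a repeller and gives monotone convergence of $q_{\mathrm{e}0}$ to $\pm 1$ for every trajectory with $q_{\mathrm{e}0}(0)\neq 0$ --- a caveat you state explicitly and the paper silently shares, since $\theta(t_{s0})=\pi$ is the same point. Second, your part (ii) is more direct: you compute $\dot{\theta}=-\lambda\sinh\left(\cos\frac{\theta}{2}\right)\left\Vert\boldsymbol{q}_{\mathrm{ev}}\right\Vert$ from~(\ref{angularEuler}) and read the sign off, whereas the paper infers the sign of $\dot{\theta}$ indirectly from the factorization $\dot{V}_{1}=\sin\frac{\theta}{2}\,\sinh\left(\cos\frac{\theta}{2}\right)\dot{\theta}$ combined with $\dot{V}_{1}\leq 0$. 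Both routes are sound; yours is tightest exactly where the paper is loosest.
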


\begin{proof}
First, we choose the following Lyapunov function,%
\begin{equation}
V_{1}\left( t\right) :=2\left( \kappa-\cosh q_{\mathrm{e}0} \right) ,
\label{v1}
\end{equation}%
where $\kappa=\max\left(\cosh q_{\mathrm{e}0}\right)$ for $q_{\mathrm{e}%
0}\in \left[-1,\ 1\right].$ By taking time derivative of~(\ref{v1}), and
using the first equation of~(\ref{system model}), the condition $\boldsymbol{%
s}=0$, and~(\ref{e_r1}), we have
\begin{align}
\dot{V}_{1}\left( t\right) & =-2\sinh \left( q_{\mathrm{e}0}\right)\dot{q}_{%
\mathrm{e}0}  \notag \\
& =\sinh \left( q_{\mathrm{e}0}\right) \boldsymbol{q}_{\mathrm{ev}}^{\mathrm{%
T}}\boldsymbol{\omega }_{\mathrm{e}}  \notag \\
& =\boldsymbol{\sigma }^{\mathrm{T}}\boldsymbol{\omega }_{\mathrm{e}}  \notag
\\
& =-\lambda \boldsymbol{\sigma }^{\mathrm{T}}\boldsymbol{\sigma } .
\label{dotv3}
\end{align}
Thus, $\dot{V}_{1}\left(t\right)\leq0$. Further, it can be derived from~(\ref%
{dotv3}) that if $\dot{V}_{1}\left( t\right)= 0$, there holds $\boldsymbol{%
\sigma}=0$. Then, it follows from~(\ref{e_r1}) that $q_{\mathrm{e}0}=0$ or $%
\boldsymbol{q}_{\mathrm{ev}}=0$. According to~(\ref{unitquaternion}), there
holds $q_{\mathrm{e}0}=1$ or $-1$ when $\boldsymbol{q}_{\mathrm{ev}}=0$.
Moreover, it can be obtained from~(\ref{v1}) that $\min
\left(V_{1}\left(t\right)\right)=V_{1}\left(t\right)|_{q_{\mathrm{e}%
0}=1}=V_{1}\left(t\right)|_{q_{\mathrm{e}0}=-1}=0$, and $V_{1}\left(t%
\right)|_{q_{\mathrm{e}0}=0}\neq 0$. This means that the switching surface $%
\boldsymbol{s}=0$ contains two equilibriums $\boldsymbol{q}_{\mathrm{e}0}=1$
and $\boldsymbol{q}_{\mathrm{e}0}=-1$. In addition, substituting $%
\boldsymbol{q}_{\mathrm{ev}}=0$ into~(\ref{switching surface}) gives $%
\boldsymbol{\omega}_{\mathrm{e}}=0$.

Thus, the conclusion (i) is proven.

Next, the anti-unwinding performance of the attitude error dynamics~(\ref%
{system model}) with system states being on the switching surface $%
\boldsymbol{s}=0 $ is proven. According to~(\ref{theta}), the Lyapunov
function (\ref{v1}) can be rewritten as
\begin{equation*}
V_{1}\left( t\right) :=2\left( \kappa -\cosh \cos \frac{\theta \left(
t\right) }{2}\right) .
\end{equation*}%
Consequently,
\begin{equation}
\dot{V}_{1}\left( t\right) =\sin \frac{\theta \left( t\right) }{2}\sinh \cos
\frac{\theta \left( t\right) }{2}\dot{\theta}\left( t\right) .  \label{dotv1}
\end{equation}%
In addition, there hold $\sin \frac{\theta \left( t\right) }{2}>0$ for $%
\theta \left( t\right) \in \left( 0,2\pi \right) $, $\sinh \cos \frac{\theta
\left( t\right) }{2}\geq 0$ for $\theta \left( t\right) \in \left( 0,\pi %
\right] $, and $\sinh \cos \frac{\theta \left( t\right) }{2}\leq 0$ for $%
\theta \left( t\right) \in \left( \pi ,2\pi \right) $. Note that $\dot{V}%
_{1}\left( t\right) \leq 0$, it can be derived from~(\ref{dotv1}) that there
hold $\dot{\theta}\left( t\right) \leq 0$ for $\theta \left( t\right) \in
\left( 0,\pi \right] $ and $\dot{\theta}\left( t\right) \geq 0$ for $\theta
\left( t\right) \in \left( \pi ,2\pi \right) $. Suppose that the system
states reach the switching surface $\boldsymbol{s}=0$ when $t=t_{s0}$. Then,
if $\theta \left( t_{s0}\right) \in \left( 0,\pi \right] $, there holds $%
\lim_{t\rightarrow \infty }\theta \left( t\right) =0$, and if $\theta \left(
t_{s0}\right) \in \left( \pi ,2\pi \right) $, there holds $%
\lim_{t\rightarrow \infty }\theta \left( t\right) =2\pi $. This implies that
the unwinding phenomenon is avoided when the system states are restricted to
the switching surface $\boldsymbol{s}=0$.
\end{proof}

\subsection{Anti-Unwinding Sliding Mode Attitude Maneuver Control Law}

\label{equivalentsec}

In this section, we need to construct a control law such that the condition $%
\boldsymbol{s}^{\mathrm{T}}\boldsymbol{\dot{s}}<0$ is satisfied. This
condition assures us that the switching surface $\boldsymbol{s}=0$ will
attract all the system trajectories.

Consider a class of state feedback control for the attitude error dynamics~(%
\ref{system model}) of a rigid spacecraft in the following form,
\begin{equation}
\boldsymbol{u}=\boldsymbol{u}_{\mathrm{eq}}+\boldsymbol{u}_{\mathrm{n}},
\label{control law}
\end{equation}%
where the term $\boldsymbol{u}_{\mathrm{eq}}$ is the equivalent control for
the nominal system, the term $\boldsymbol{u}_{\mathrm{n}}$ is designed to
compensate the disturbance.
Thus, the equivalent control $\boldsymbol{u}_{\mathrm{eq}}$ can be obtained
from the nominal system part by setting $\dot{\boldsymbol{s}}$ to be zero.
That is%
\begin{equation}
\dot{\boldsymbol{s}}=\dot{\boldsymbol{\omega}}_{\mathrm{e}}+\lambda \dot{%
\boldsymbol{\sigma}}=0.  \label{dots}
\end{equation}%
The nominal part of the attitude error dynamics~(\ref{system model}) is
\begin{equation*}
\dot{\boldsymbol{\omega}}_{\mathrm{e}}=J^{-1}\left( -\boldsymbol{\omega }_{%
\mathrm{e}}^{\times }J\boldsymbol{\omega }_{\mathrm{e}}+\boldsymbol{u}_{%
\mathrm{eq}}\right) .
\end{equation*}%
Substituting this expression into~(\ref{dots}), gives%
\begin{equation}
\boldsymbol{u}_{\mathrm{eq}}=\boldsymbol{\omega }_{\mathrm{e}}^{\times }J%
\boldsymbol{\omega }_{\mathrm{e}}-\lambda J\dot{\boldsymbol{\sigma}}.
\label{u_eq}
\end{equation}%
The control term $\boldsymbol{u}_{\mathrm{n}}$ is designed as%
\begin{equation}
\boldsymbol{u}_{\mathrm{n}}=-\left(\gamma
_{1}+\gamma_{2}\left(t\right)\right)\boldsymbol{f}\left( s\right) ,
\label{u_n}
\end{equation}%
where $\gamma _{1}\geq \left\Vert \boldsymbol{d}\right\Vert _{\max }$, $%
\gamma_{2}\left(t\right)$ is a positive-valued function which will be given
later, and
\begin{equation}
\boldsymbol{f}\left( \boldsymbol{s}\right) =\left\{
\begin{array}{l}
\mathrm{sgn}\left( \boldsymbol{s}\right) ,\left\Vert \boldsymbol{s}%
\right\Vert \neq 0, \\
0,\qquad \left\Vert \boldsymbol{s}\right\Vert =0,%
\end{array}%
\right.  \label{f}
\end{equation}%
with $\mathrm{sgn}\left( \boldsymbol{s}\right) =\left[ \mathrm{sgn}\left(
s_{1}\right) \;\mathrm{sgn}\left( s_{2}\right) \;\mathrm{sgn}\left(
s_{3}\right) \right] ^{\mathrm{T}}$, and
\begin{equation*}
\mathrm{sgn}\left( s_{i}\right) =\left\{
\begin{array}{c}
1,\text{ }s_{i}>0 \\
-1,\text{ }s_{i}\leq 0%
\end{array}%
\right. \left( i=1,2,3\right) .
\end{equation*}

Then, the following anti-unwinding sliding mode attitude maneuver control
(briefly, AUSMAMC) law is presented,%
\begin{equation}
\left\{
\begin{array}{l}
\boldsymbol{u}=\boldsymbol{u}_{\mathrm{eq}}+\boldsymbol{u}_{\mathrm{n}}, \\
\boldsymbol{u}_{\mathrm{eq}}=\boldsymbol{\omega }_{\mathrm{e}}^{\times }J%
\boldsymbol{\omega }_{\mathrm{e}}-\lambda J\dot{\boldsymbol{\sigma}}, \\
\boldsymbol{u}_{\mathrm{n}}=-\left(\gamma
_{1}+\gamma_{2}\left(t\right)\right)\boldsymbol{f}\left( s\right) , \\
\boldsymbol{s}=\boldsymbol{\omega }_{\mathrm{e}}+\lambda \boldsymbol{\sigma }%
, \\
\boldsymbol{\sigma }=\sinh \left( q_{\mathrm{e}0}\right) \boldsymbol{q}_{%
\mathrm{ev}},%
\end{array}%
\right.  \label{sliding mode control}
\end{equation}%
where $\lambda $ is a positive numbers, $\gamma _{1}\geq\left\Vert
\boldsymbol{d}\right\Vert _{\max }$, and $\gamma_{2}\left(t\right)$ is a
positive-valued function, which will be given in the following section.

\subsection{Convergence Analysis}

\label{convergenceanalysis} 


In this section, the convergence of the closed-loop system under the
developed AUSMAMC law (\ref{sliding mode control}) is analyzed. In addition,
the anti-unwinding performance is proven in the following theorem.
\begin{theorem}\label{aubss}
Consider a rigid spacecraft described by~(\ref{system model}) in the
presence of disturbance. If the parameter $%
\gamma_{2}\left(t\right)$ of the proposed AUSMAMC law (\ref{sliding mode control}%
) is chosen as
\begin{equation}
\gamma _{2}\left( t\right) =\frac{%
\lambda }{\lambda _{\min }\left(J^{-1}\right)}\left\vert \dot{g} \right\vert ,  \label{gamma2}
\end{equation}%
where \begin{equation}
g:=\sinh \left( q_{\mathrm{e}0}\right) \left\Vert \boldsymbol{q}_{\mathrm{ev}%
}\right\Vert.\label{g}
\end{equation}%
 Then, the following
conclusions are achieved:

(i) The switching function $\boldsymbol{s}$ converges to zero in finite time.

(ii) The unwinding phenomenon is avoided before the system states reach the switching surface $\boldsymbol{s}=0$.

\end{theorem}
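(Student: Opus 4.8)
The plan is to treat the two claims separately, proving the reaching property (i) by a standard sliding-mode reaching argument and the anti-unwinding property (ii) by reducing it to the monotonicity of the Lyapunov function $V_1$ already used in Theorem~\ref{aucoss}.

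For (i), I would first substitute the control law \eqref{sliding mode control} into the error dynamics \eqref{system model}. Since $\boldsymbol{u}_{\mathrm{eq}}$ cancels the drift term $\boldsymbol{\omega}_{\mathrm{e}}^{\times}J\boldsymbol{\omega}_{\mathrm{e}}$ and the $-\lambda J\dot{\boldsymbol{\sigma}}$ contribution cancels $\lambda\dot{\boldsymbol{\sigma}}$ in $\dot{\boldsymbol{s}}=\dot{\boldsymbol{\omega}}_{\mathrm{e}}+\lambda\dot{\boldsymbol{\sigma}}$, one is left with $\dot{\boldsymbol{s}}=J^{-1}\!\left(\boldsymbol{u}_{\mathrm{n}}+\boldsymbol{d}\right)=J^{-1}\!\left(-(\gamma_1+\gamma_2)\,\boldsymbol{f}(\boldsymbol{s})+\boldsymbol{d}\right)$. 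Choosing the weighted Lyapunov function $V_2=\tfrac12\boldsymbol{s}^{\mathrm{T}}J\boldsymbol{s}$ clears the awkward $J^{-1}$, giving $\dot V_2=\boldsymbol{s}^{\mathrm{T}}\!\left(-(\gamma_1+\gamma_2)\,\mathrm{sgn}(\boldsymbol{s})+\boldsymbol{d}\right)=-(\gamma_1+\gamma_2)\|\boldsymbol{s}\|_1+\boldsymbol{s}^{\mathrm{T}}\boldsymbol{d}$. The bound $\boldsymbol{s}^{\mathrm{T}}\boldsymbol{d}\le\|\boldsymbol{s}\|_1\|\boldsymbol{d}\|$ together with $\gamma_1\ge\|\boldsymbol{d}\|_{\max}$ then yields $\dot V_2\le-(\gamma_1-\|\boldsymbol{d}\|_{\max})\|\boldsymbol{s}\|_1-\gamma_2\|\boldsymbol{s}\|_1$, and after the norm equivalence $\|\boldsymbol{s}\|_1\ge\|\boldsymbol{s}\|$ and $\|\boldsymbol{s}\|\ge\sqrt{2/\lambda_{\max}(J)}\,V_2^{1/2}$ this becomes $\dot V_2\le-\eta V_2^{1/2}$ for a positive constant $\eta$. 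Applying Lemma~\ref{lemma1} with $\alpha=\tfrac12$ delivers finite-time reaching of $\boldsymbol{s}=0$.

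For (ii), the key observation is that the anti-unwinding requirement is precisely the sign condition on $\dot\theta$ from Theorem~\ref{aucoss}, and by \eqref{dotv1} together with \eqref{g} this is equivalent to $\dot V_1=g\,\dot\theta\le0$ on each side of $\theta=\pi$. Using $\dot V_1=\boldsymbol{\sigma}^{\mathrm{T}}\boldsymbol{\omega}_{\mathrm{e}}$ (established in the proof of Theorem~\ref{aucoss}) and $\boldsymbol{\omega}_{\mathrm{e}}=\boldsymbol{s}-\lambda\boldsymbol{\sigma}$ off the surface, I would write $\dot V_1=\boldsymbol{\sigma}^{\mathrm{T}}\boldsymbol{s}-\lambda\|\boldsymbol{\sigma}\|^2\le\|\boldsymbol{\sigma}\|\big(\|\boldsymbol{s}\|-\lambda|g|\big)$, where $\|\boldsymbol{\sigma}\|=|g|$ by \eqref{e_r1} and \eqref{g}. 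Hence $\dot V_1\le0$, and the unwinding is avoided, on the tube $\{\|\boldsymbol{s}\|\le\lambda|g|\}$, that is, as long as the distance to the surface stays dominated by $\lambda|g|$.

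The main obstacle is showing that the dynamic parameter \eqref{gamma2} keeps the trajectory inside this tube until it reaches the surface, equivalently that $g$ (and hence $q_{\mathrm{e}0}$) cannot cross zero during reaching. The plan here is a barrier argument: near a putative crossing $|g|\to0$, so the tube condition forces $\|\boldsymbol{s}\|\to0$, and the choice $\gamma_2=\lambda\lambda_{\max}(J)|\dot g|$ is calibrated so that the reaching term in $\dot V_2$, which scales like $\gamma_2\|\boldsymbol{s}\|=\lambda\lambda_{\max}(J)|\dot g|\,\|\boldsymbol{s}\|$, dominates the growth rate $\lambda|\dot g|$ of the moving boundary $\lambda|g|$; thus $\|\boldsymbol{s}\|$ collapses at least as fast as $\lambda|g|$ whenever $g$ drifts toward zero. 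Making this comparison rigorous is delicate, because the $J^{-1}$ coupling in $\dot{\boldsymbol{s}}$ prevents $\tfrac{\mathrm{d}}{\mathrm{d}t}\|\boldsymbol{s}\|$ from being expressed cleanly in the unweighted norm; I expect to carry it out in the $J$-weighted norm, comparing $\tfrac{\mathrm{d}}{\mathrm{d}t}(\tfrac12\boldsymbol{s}^{\mathrm{T}}J\boldsymbol{s})$ against $\tfrac{\mathrm{d}}{\mathrm{d}t}(\tfrac12\lambda^2 g^2)$ and using continuity to rule out a first zero-crossing time of $g$. Once no zero-crossing of $g$ is established, the sign of $q_{\mathrm{e}0}$ is preserved throughout reaching, the trajectory arrives on $\boldsymbol{s}=0$ on the correct side of $\theta=\pi$, and Theorem~\ref{aucoss}(ii) completes the maneuver to the nearer equilibrium.
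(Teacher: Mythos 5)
Your proof of conclusion (i) is essentially the paper's reaching argument and is sound; your $J$-weighted Lyapunov function $\tfrac12\boldsymbol{s}^{\mathrm{T}}J\boldsymbol{s}$ is in fact slightly cleaner, since $\dot{\boldsymbol{s}}=J^{-1}\left(\boldsymbol{u}_{\mathrm{n}}+\boldsymbol{d}\right)$ turns the derivative into $\boldsymbol{s}^{\mathrm{T}}\left(\boldsymbol{u}_{\mathrm{n}}+\boldsymbol{d}\right)$ and avoids the paper's bound (\ref{inequ}). (Both you and the paper share the caveat that the coefficient multiplying $V_2^{1/2}$ is time-varying and can vanish when $\dot g=0$, so Lemma \ref{lemma1} strictly applies only if $\gamma_1>\left\Vert\boldsymbol{d}\right\Vert_{\max}$; I do not hold that against you.) The genuine gap is in conclusion (ii). Your reduction is correct: on the tube $\left\Vert\boldsymbol{s}\right\Vert\le\lambda|g|$ one has, using $\left\Vert\boldsymbol{\sigma}\right\Vert=|g|$,
\begin{equation*}
\dot V_1=\boldsymbol{\sigma}^{\mathrm{T}}\boldsymbol{s}-\lambda\left\Vert\boldsymbol{\sigma}\right\Vert^2\le|g|\left(\left\Vert\boldsymbol{s}\right\Vert-\lambda|g|\right)\le0,
\end{equation*}
and $\dot V_1=g\dot\theta$ then gives the required sign of $\dot\theta$ on either side of $\theta=\pi$. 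But the invariance of this tube during the reaching phase \emph{is} conclusion (ii), and you leave it as an admittedly ``delicate'' step with only a sketch — and the sketch as stated fails. First, to recover the unweighted condition $\left\Vert\boldsymbol{s}\right\Vert\le\lambda|g|$ from a $J$-weighted comparison you would need the weighted tube $\boldsymbol{s}^{\mathrm{T}}J\boldsymbol{s}\le\lambda_{\min}\left(J\right)\lambda^2g^2$; but the rest-to-rest start gives $\boldsymbol{s}\left(0\right)=\lambda\boldsymbol{\sigma}\left(0\right)$, hence $\boldsymbol{s}\left(0\right)^{\mathrm{T}}J\boldsymbol{s}\left(0\right)=\lambda^2\boldsymbol{\sigma}\left(0\right)^{\mathrm{T}}J\boldsymbol{\sigma}\left(0\right)\ge\lambda_{\min}\left(J\right)\lambda^2g\left(0\right)^2$, generically strictly, so the trajectory starts \emph{outside} that weighted tube and no ``first exit time'' contradiction is available. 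Second, ruling out a zero crossing of $g$ (equivalently of $q_{\mathrm{e}0}$) is the wrong invariant: unwinding concerns the sign of $\dot\theta$, not sign preservation of $q_{\mathrm{e}0}$, and the paper's proof neither needs nor establishes such a no-crossing property.

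The missing idea — and what the paper actually does — is that no barrier or continuity argument is needed: the tube estimate follows by \emph{integrating} the reaching inequality and exploiting the rest-to-rest initial condition. From (\ref{V3finite}), off the surface, $\frac{\mathrm{d}}{\mathrm{d}t}\left\Vert\boldsymbol{s}\right\Vert\le-\lambda_{\min}\left(J^{-1}\right)\gamma_2\left(t\right)=-\lambda|\dot g|$; since $\boldsymbol{\omega}_{\mathrm{e}}\left(0\right)=0$ gives $\left\Vert\boldsymbol{s}\left(0\right)\right\Vert=\lambda\left\Vert\boldsymbol{\sigma}\left(0\right)\right\Vert=\lambda|g\left(0\right)|$, integration and the triangle inequality yield
\begin{equation*}
\left\Vert\boldsymbol{s}\left(t\right)\right\Vert\le\lambda\left(|g\left(0\right)|-\int_0^t|\dot g\left(\tau\right)|\,\mathrm{d}\tau\right)\le\lambda\left(|g\left(0\right)|-|g\left(t\right)-g\left(0\right)|\right)\le\lambda|g\left(t\right)|,
\end{equation*}
which is exactly your tube, valid up to the reaching time; this is precisely where the particular gain (\ref{gamma2}) and the hypothesis $\boldsymbol{\omega}_{\mathrm{e}}\left(0\right)=0$ enter. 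The paper runs this same computation scalar-wise through the projection $v\left(t\right)=\frac{\boldsymbol{q}_{\mathrm{ev}}^{\mathrm{T}}}{\left\Vert\boldsymbol{q}_{\mathrm{ev}}\right\Vert}\boldsymbol{s}=\dot\theta+\lambda g$ (Lemmas \ref{idenpotent} and \ref{eigen} give $|v|\le\left\Vert\boldsymbol{s}\right\Vert$, and $\boldsymbol{s}\left(0\right)$ parallel to $\boldsymbol{q}_{\mathrm{ev}}\left(0\right)$ gives $|v\left(0\right)|=\left\Vert\boldsymbol{s}\left(0\right)\right\Vert$), then reads off $\dot\theta\le0$ for $\theta\left(0\right)\in\left(0,\pi\right]$ and $\dot\theta\ge0$ for $\theta\left(0\right)\in\left(\pi,2\pi\right)$ directly. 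If you replace your barrier sketch with this integration step, your tube reduction closes and the remainder of your outline (handing over to Theorem \ref{aucoss} once $\boldsymbol{s}=0$ is reached) is fine.
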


\begin{proof}
To prove the conclusion (i), we choose the following Lyapunov function,
\begin{equation}
V_{2}\left( t\right) =\frac{1}{2}\boldsymbol{s}^{\mathrm{T}}\boldsymbol{s}.
\label{V3}
\end{equation}%
With the help of~(\ref{system model}) and~(%
\ref{switching surface}), we obtain from~(\ref{V3}) that
\begin{align}
\dot{V}_{2}\left( t\right) & =\boldsymbol{s}^{\mathrm{T}}\boldsymbol{\dot{s}}
\notag \\
& =\boldsymbol{s}^{\mathrm{T}}\left( \boldsymbol{\dot{\omega}}_{\mathrm{e}%
}+\lambda \boldsymbol{\dot{\sigma}}\right)  \notag \\
& =\boldsymbol{s}^{\mathrm{T}}\left( J^{-1}\left( -\boldsymbol{\omega }_{%
\mathrm{e}}^{\times }J\boldsymbol{\omega }_{\mathrm{e}}+\boldsymbol{u}+%
\boldsymbol{d}\right) +\lambda \dot{\boldsymbol{\sigma}}\right) .
\label{v1zhongjian}
\end{align}%
Substituting the AUSMAMC law (\ref{sliding mode control}) with $\gamma
_{1}\geq \left\Vert \boldsymbol{d}\right\Vert _{\max }$ into~(\ref%
{v1zhongjian}), results in
\begin{align}
\dot{V}_{2}\left( t\right) & =\boldsymbol{s}^{\mathrm{T}}\left( J^{-1}\left(
-\boldsymbol{\omega }_{\mathrm{e}}^{\times }J\boldsymbol{\omega }_{\mathrm{e}%
}+\boldsymbol{u}_{\mathrm{eq}}+\boldsymbol{u}_{\mathrm{n}}+\boldsymbol{d}%
\right) +\lambda \boldsymbol{\dot{\sigma}}\right)  \notag \\
& =\boldsymbol{s}^{\mathrm{T}}J^{-1}\left( \boldsymbol{u}_{\mathrm{n}}+%
\boldsymbol{d}\right)  \notag \\
& =-\gamma _{2}\left( t\right) \boldsymbol{s}^{\mathrm{T}}J^{-1}\text{sgn}%
\left( \boldsymbol{s}\right) +\boldsymbol{s}^{\mathrm{T}}J^{-1}\left(
\left\Vert \boldsymbol{d}\right\Vert _{\max }-\gamma _{1}\right)  \notag \\
& \leq -\gamma _{2}\left( t\right) \boldsymbol{s}^{\mathrm{T}}J^{-1}\text{sgn%
}\left( \boldsymbol{s}\right) .  \label{sleq0}
\end{align}%
It is obvious that there holds%
\begin{equation}
\boldsymbol{s}^{\mathrm{T}}J^{-1}\mathrm{sgn}\left( \boldsymbol{s}\right)
\geq \lambda _{\min }\left( J^{-1}\right) \left\Vert \boldsymbol{s}%
\right\Vert .  \label{inequ}
\end{equation}%
Thus, it can be obtained from (\ref{sleq0}) and (\ref{inequ}) that
\begin{equation*}
\dot{V}_{2}\left( t\right) \leq -\gamma _{2}\left( t\right) \lambda _{\min
}\left( J^{-1}\right) \left\Vert \boldsymbol{s}\right\Vert .
\end{equation*}%
By combining this with (\ref{V3}) and~(\ref{inequ}), it is easy to obtain
that%
\begin{align}
\dot{V}_{2}\left( t\right) & \leq -\sqrt{2}\gamma _{2}\left( t\right)
\lambda _{\min }\left( J^{-1}\right) \left( \frac{1}{2}\boldsymbol{s}^{%
\mathrm{T}}\boldsymbol{s}\right) ^{\frac{1}{2}}  \notag \\
& =-\sqrt{2}\gamma _{2}\left( t\right) \lambda _{\min }\left( J^{-1}\right)
V_{2}^{\frac{1}{2}}\left( t\right) .  \label{V3finite}
\end{align}%
Clearly, $ \dot{V}_{2}\left( t\right)\leq 0$. Thus, it can be obtained from~Lemma \ref{lemma1} that the switching function
$\boldsymbol{s}$ converges to $0$ in finite time. The proof of (i) is
completed.

Next, by designing the dynamic parameter $\gamma_{2}\left(t\right)$ for the
proposed AUSMAMC law~(\ref{sliding mode control}), the anti-unwinding
performance before the system states reach the switching surface $%
\boldsymbol{s}=0$ is guaranteed.

In view of~(\ref{V3finite}), we get%
\begin{equation*}
\frac{\dot{V}_{2}\left( t\right) }{V_{2}^{\frac{1}{2}}\left( t\right) }\leq -%
\sqrt{2}\gamma _{2}\left( t\right) \lambda _{\min }\left( J^{-1}\right) .
\end{equation*}%
Suppose that the initial time is $t_{0}=0$. Then, by taking integral of both
sides of the above equation, we arrive at%
\begin{equation*}
\int_{0}^{t}\frac{\dot{V}_{2}\left( \tau \right) }{V_{2}^{\frac{1}{2}}\left(
\tau \right) }d\tau \leq -\sqrt{2}\lambda _{\min }\left( J^{-1}\right)
\int_{0}^{t}\gamma _{2}\left( \tau \right) d\tau .
\end{equation*}%
A direct calculation gives
\begin{equation}
V_{2}^{\frac{1}{2}}\left( t\right) \leq -\frac{\lambda _{\min }\left(
J^{-1}\right) }{\sqrt{2}}\int_{0}^{t}\gamma _{2}\left( \tau \right) d\tau
+V_{2}^{\frac{1}{2}}\left( 0\right) .  \label{V3inequality}
\end{equation}

Let%
\begin{equation}
v\left( t\right) =\frac{\boldsymbol{q}_{\mathrm{ev}}^{\mathrm{T}}}{%
\left\Vert \boldsymbol{q}_{\mathrm{ev}}\right\Vert }\boldsymbol{s}.
\label{vts}
\end{equation}%
By~(\ref{angularEuler}),~(\ref{switching surface}), and~(\ref{g}), the above
relation can be rewritten as%
\begin{align}
v\left( t\right) & =\left( \frac{\boldsymbol{q}_{\mathrm{ev}}^{\mathrm{T}}}{%
\left\Vert \boldsymbol{q}_{\mathrm{ev}}\right\Vert }\boldsymbol{\omega }_{%
\mathrm{e}}+\lambda \sinh \left( q_{\mathrm{e}0}\right) \frac{\boldsymbol{q}%
_{\mathrm{ev}}^{\mathrm{T}}\boldsymbol{q}_{\mathrm{ev}}}{\left\Vert
\boldsymbol{q}_{\mathrm{ev}}\right\Vert }\right)   \notag \\
& =\left( \dot{\theta}\left( t\right) +\lambda g\right) .  \label{vt}
\end{align}%
Further, it can be derived from (\ref{vts}) that%
\begin{align}
v^{2}\left( t\right) & =\left( \frac{\boldsymbol{q}_{\mathrm{ev}}^{\mathrm{T}%
}}{\left\Vert \boldsymbol{q}_{\mathrm{ev}}\right\Vert }\boldsymbol{s}\right)
^{\mathrm{T}}\boldsymbol{\frac{\boldsymbol{q}_{\mathrm{ev}}^{\mathrm{T}}}{%
\left\Vert \boldsymbol{q}_{\mathrm{ev}}\right\Vert }s}\notag \\
& =\boldsymbol{s}^{\mathrm{T}}\frac{\boldsymbol{q}_{\mathrm{ev}}\boldsymbol{q%
}_{\mathrm{ev}}^{\mathrm{T}}}{\left\Vert \boldsymbol{q}_{\mathrm{ev}%
}\right\Vert ^{2}}\boldsymbol{s}.\label{v2t}
\end{align}%
Note that $\frac{\boldsymbol{q}_{\mathrm{ev}}}{\left\Vert \boldsymbol{q}_{\mathrm{ev%
}}\right\Vert }$ is a unit vector, thus it follows from~(\ref{v2t}), Lemmas~\ref{idenpotent} and~{\ref{eigen}} that%
\begin{align*}
v^{2}\left( t\right) & \leq \lambda _{\max }\left( \frac{\boldsymbol{q}_{%
\mathrm{ev}}\boldsymbol{q}_{\mathrm{ev}}^{\mathrm{T}}}{\left\Vert
\boldsymbol{q}_{\mathrm{ev}}\right\Vert ^{2}}\right) \left\Vert \boldsymbol{s%
}\right\Vert ^{2} \notag\\
& \leq \left\Vert \boldsymbol{s}\right\Vert ^{2}.
\end{align*}%
Combining this with (\ref{V3}) yields%
\begin{equation}
\frac{1}{2}v^{2}\left( t\right) \leq V_{2}\left( t\right) .  \label{vtV2}
\end{equation}

It should be noted that, the rest-to-rest attitude maneuver issue is
considered in this paper. Thus, $\boldsymbol{\omega }_{\mathrm{e}}\left(
0\right) =0$. Further, it can be obtained from~(\ref{angularEuler}) that $\dot{\theta}%
\left( 0\right) =0$. Then, the initial value of $v\left( t\right) $ in (\ref%
{vt}) can be obtained as
\begin{equation}
v\left( 0\right) =\lambda g\left( 0\right) .  \label{v0}
\end{equation}%
Moreover, with the help of~(\ref{g}) and~(\ref{v0}), we can get the initial value of $%
V_{2}\left( 0\right) $ in (\ref{V3}) as (recall that $\boldsymbol{\omega}_{\mathrm{e}}=0$)
\begin{align}
V_{2}\left( 0\right) & =\frac{1}{2}\lambda ^{2}\left( \sinh \left( q_{%
\mathrm{e}0}\right) \boldsymbol{q}_{\mathrm{ev}}\right) ^{\mathrm{T}}\left(
\sinh \left( q_{\mathrm{e}0}\right) \boldsymbol{q}_{\mathrm{ev}}\right)
\notag \\
& =\frac{1}{2}\lambda ^{2}\sinh ^{2}\left( q_{\mathrm{e}0}\right)
\boldsymbol{q}_{\mathrm{ev}}^{\mathrm{T}}\boldsymbol{q}_{\mathrm{ev}}  \notag
\\
& =\frac{1}{2}v^{2}\left( 0\right)  \label{V20}
\end{align}%
Thus, the following relation can be obtained from~(\ref{V3inequality}),~(\ref%
{vtV2}), and~(\ref{V20}),
\begin{align*}
\left( \frac{1}{2}v^{2}\left( t\right) \right) ^{\frac{1}{2}}\leq V_{2}^{%
\frac{1}{2}}\left( t\right) \leq & -\frac{\lambda _{\min }\left(
J^{-1}\right) }{\sqrt{2}}\int_{0}^{t}\gamma _{2}\left( \tau \right) d\tau \\
& +\left( \frac{1}{2}v^{2}\left( 0\right) \right) ^{\frac{1}{2}},
\end{align*}%
which can be further written as,
\begin{equation}
|v\left( t\right) |\leq -\lambda _{\min }\left( J^{-1}\right)
\int_{0}^{t}\gamma _{2}\left( \tau \right) d\tau +|v\left( 0\right) |.
\label{inter1}
\end{equation}%
Moreover, because $\gamma _{2}\left( t\right) > 0$, then it can be
obtained from~(\ref{inter1}) that $v\left( t\right) $ will decrease to $0$
when $v\left( 0\right) >0$, and $v\left( t\right) $ will increase to $0$
when $v\left( 0\right) <0$.

To prove the anti-unwinding property of the proposed control law (\ref%
{sliding mode control}), we need to prove that $\dot{\theta}%
\left(t\right)\leq0$ for $\theta\left(0\right) \in \left(0,\ \pi\right]$ and
$\dot{\theta}\left(t\right)\geq0$ for $\theta\left(0\right) \in \left(\pi,\
2\pi\right)$. For this end, the following two cases are considered to
complete the proof.

(a) When $\theta \left( 0\right) \in \left( 0,\pi \right] ,$ by the first
equation of~(\ref{unit attitude error}), we have $\boldsymbol{q}_{\mathrm{e}%
0}\left( 0\right) >0$. Then, using~(\ref{v0}) and~(\ref{g}), we get
\begin{equation*}
v\left( 0\right) =\lambda \sinh \left( q_{\mathrm{e}0}\left( 0\right)
\right) \left\Vert \boldsymbol{q}_{\mathrm{ev}}\left( 0\right) \right\Vert
>0.
\end{equation*}%
Thus, $v\left( t\right) $ will decrease to $0$ due to (\ref{inter1}). In
such a case, it can be further obtained from (\ref{inter1}) that
\begin{equation*}
\dot{\theta}\left( t\right) +\lambda g\leq -\lambda _{\min }\left(
J^{-1}\right) \int_{0}^{t}\gamma _{2}\left( \tau \right) d\tau +\lambda
g\left( 0\right) .
\end{equation*}%
It can be further rewritten as
\begin{align}
\dot{\theta}\left( t\right) \leq & -\lambda _{\min }\left( J^{-1}\right)
\int_{0}^{t}\gamma _{2}\left( \tau \right) d\tau +\lambda \left( g\left(
0\right) -g\right)  \notag \\
=& -\lambda _{\min }\left( J^{-1}\right) \int_{0}^{t}\gamma _{2}\left( \tau
\right) d\tau -\lambda \int_{0}^{t}\frac{\mathrm{d}g}{\mathrm{d}\tau }d\tau
\notag \\
=& -\int_{0}^{t}\left( \lambda _{\min }\left( J^{-1}\right) \gamma
_{2}\left( \tau \right) +\lambda \frac{\mathrm{d}g}{\mathrm{d}\tau }\right)
d\tau .  \label{theta11}
\end{align}

If $\dot{g}>0,$ then it can be obtained from (\ref{gamma2}) that $\gamma
_{2}\left( t\right) =\frac{\lambda \dot{g}}{\lambda _{\min }\left(
J^{-1}\right) }$. It is followed from~(\ref{theta11}) that
\begin{align*}
\dot{\theta}\left( t\right) & \leq -2\lambda \int_{0}^{t}\frac{\mathrm{d}%
g\left( \tau \right) }{\mathrm{d}\tau }d\tau \\
& \leq 0.
\end{align*}%
\quad If $\dot{g}\leq 0,$ then it can be obtained from (\ref{gamma2}) that $%
\gamma _{2}\left( t\right) =-\frac{\lambda \dot{g}}{\lambda _{\min }\left(
J^{-1}\right) }$. With this, it can be derived from (\ref{theta11}) that $%
\dot{\theta}\left( t\right) \leq 0.$

In conclusion, it can be obtained from above two cases that when $\theta
\left( 0\right) \in \left( 0,\pi \right] ,$ the rotation angle $\theta
\left(t\right)$ will decrease to $0$.

(b) When $\theta \left( 0\right) \in \left( \pi ,\ 2\pi \right) ,$ by the
first equation of~(\ref{unit attitude error}), we have $\boldsymbol{q}_{%
\mathrm{e}0}\left( 0\right) <0$. Then, using~(\ref{v0}) and~(\ref{g}), we
get
\begin{equation*}
v\left( 0\right) =\lambda \sinh \left( q_{\mathrm{e}0}\left( 0\right)
\right) \left\Vert \boldsymbol{q}_{\mathrm{ev}}\left( 0\right) \right\Vert
<0.
\end{equation*}%
Thus, $v\left( t\right) $ will increase to $0$ due to~(\ref{inter1}). In
this case, it can be obtained from~(\ref{inter1}) that,
\begin{equation*}
-\dot{\theta}\left( t\right) -\lambda g\leq -\lambda _{\min }\left(
J^{-1}\right) \int_{0}^{t}\gamma _{2}\left( \tau \right) d\tau -\lambda
g\left( 0\right) ,
\end{equation*}%
or, equivalently, %
\begin{align}
\dot{\theta}\left( t\right) & \geq \int_{0}^{t}\lambda _{\min }\left(
J^{-1}\right) \gamma _{2}\left( \tau \right) d\tau +\lambda g\left( 0\right)
-\lambda g  \notag \\
& =\int_{0}^{t}\lambda _{\min }\left( J^{-1}\right) \gamma _{2}\left( \tau
\right) d\tau -\lambda \int_{0}^{t}\frac{\mathrm{d}g}{\mathrm{d}\tau }d\tau
\notag \\
& =\int_{0}^{t}\left( \lambda _{\min }\left( J^{-1}\right) \gamma _{2}\left(
\tau \right) d\tau -\lambda \frac{\mathrm{d}g}{\mathrm{d}\tau }\right) d\tau
.  \label{theta12}
\end{align}

If $\dot{g}>0$, there holds $\gamma _{2}\left( t\right) =\frac{\lambda \dot{g%
}}{\lambda _{\min }\left( J^{-1}\right) }.$ Substitute it into (\ref{theta12}%
), we have $\dot{\theta}\left( t\right) \geq 0$.

If $\dot{g}\leq 0$, there holds $\gamma _{2}\left( t\right) =-\frac{\lambda
\dot{g}}{\lambda _{\min }\left( J^{-1}\right) }.$ Substitute it into (\ref%
{theta12}), yields%
\begin{align*}
\dot{\theta}\left( t\right) & \geq -2\lambda \int_{0}^{t}\frac{\mathrm{d}g}{%
\mathrm{d}\tau }d\tau \\
& \geq 0.
\end{align*}

Thus, it can be obtained from above two cases that when $\theta \left(
0\right) \in \left( \pi ,2\pi \right) ,$ the rotation angle $\theta
\left(t\right)$ will increase to $2\pi $.

Based on above discussion, we have proven the conclusion that the unwinding
phenomenon is successfully avoided under the AUSMAMC law~(\ref{sliding
mode control}) with $\gamma _{2}\left( t\right) =\frac{\lambda \left\vert
\dot{g}\right\vert }{\lambda _{\min }\left( J^{-1}\right) }$.
\end{proof}

In Theorem \ref{aubss}, the anti-unwinding performance before the system
states reach the switching surface is proven. In Theorem \ref{aucoss}, the
anti-unwinding performance when the system states are constricted on the
switching surface is also shown. The results in these two theorems have
illustrated that the proposed AUSMAMC law~(\ref{sliding mode control}) has
the performance of anti-unwinding.

\begin{remark}
A drawback of the control law~(\ref{u_n}) is that it is
discontinuous about the switching surface $\boldsymbol{s}=0$. This
characteristic may cause an undesirable chattering phenomenon. For practical
implementations, the controller must be smoothed. Thus, the discontinuous
function $\mathrm{sgn}\left( \boldsymbol{s}\right) $ is replaced by the smooth
continuous function $\boldsymbol{l}\left( \boldsymbol{s}\right) :=\left[
l\left( s_{1}\right) \ l\left( s_{2}\right) \ l\left( s_{3}\right) \right] ^{%
\mathrm{T}}$ with $l\left( s_{i}\right) $ in the following
equation,
\begin{equation}
l\left( s_{i}\right) :=\left\{
\begin{array}{lc}
\mathrm{sgn}\left( s_{i}\right) , & \mbox{if }\left\vert s_{i}\right\vert
\geq \varepsilon, \\
\mathrm{\arctan } \frac{s_{i}\tan \left( 1\right)} {\varepsilon} , & \mbox{if }%
\left\vert s_{i}\right\vert < \varepsilon,%
\end{array}%
i=1,2,3,\right.   \label{f_si}
\end{equation}%
where $\varepsilon $ is a small positive value. As $\varepsilon $ approaches
zero, the performance of this boundary layer can be made
arbitrarily close to that of original control law.
\end{remark}

The advantage of the proposed AUSMAMC law (\ref{sliding mode control}) is
that the unwinding phenomenon can be avoided during the rigid spacecraft
attitude maneuver, and the disturbance can be compensated by the designed
controller. Besides, the developed control law has only two tunable
parameters.

\section{Example}

In this section, simulations are conducted to demonstrate the performance of
the presented AUSMAMC law~(\ref{sliding mode control}) for rest-to-rest
attitude maneuvers of a rigid spacecraft. In addition, the existing
controller (11) in~\cite{zhu2011adaptive} and (23) in~\cite%
{tiwari2018spacecraft} are adopted for comparison.

\subsection{Simulation Settings}

\subsubsection{Spacecraft parameter values}

\label{system parameters} The inertia matrix of the rigid spacecraft is $J=%
\left[ 20\ 0\ 0.9;0\ 17\right.$ $\left. 0; 0.9\ 0\ 15\right] \mathrm{kg\cdot
m}^{2}$. The initial value of the attitude velocity $\boldsymbol{\omega}$
and quaternion $\boldsymbol{q}$ are given in TABLE~\ref{initialvalue}. The
disturbance is $\boldsymbol{d}=10^{-2}\times\left[ \sin \left( 0.05t\right)\
0.5\sin \left( 0.05t\right)\ -\cos \left( 0.05t\right) \right] ^{\mathrm{T}}$%
.
\begin{table}[thb]
\caption{Initial value of the signal of the rigid spacecraft system}
\label{initialvalue}\centering
{%
\scalebox{0.9}{\scriptsize\begin{tabular}{llll}
\Xhline{1pt} Notation & Unit & Meaning & Initial value \\ \hline
\vspace{0.1cm} $\boldsymbol{q}\left(0\right)\in\mathbb{R}^{4}$ & / & \makecell[{ll}]{ Attitude of
$\mathcal{F}_{\mathrm{b}}$ with respect to $\mathcal{F}_{\mathrm{I}}$} & $\left[
1\ 0\ 0\ 0\right] ^{\text{T}} $ \\
\vspace{0.1cm} $q_{0}\left(0\right) \in \mathbb{R}$ & / & Scalar part of $\boldsymbol{q}$
& $1$ \\
$\boldsymbol{q}_{\mathrm{v}}\left(0\right)\in\mathbb{R}^{3}$ & / & Vector part of $\boldsymbol{q}$ & $\left[0\ 0\ 0\right]^{\text{T}} $ \\
\vspace{0.1cm} $\boldsymbol{\omega}\left(0\right)\in \mathbb{R}^{3}$ & $\text{rad/s} $ & \makecell[{ll}]{Attitude of $\mathcal{F}_{\mathrm{b}}$ with respect to
$\mathcal{F}_{\mathrm{I}}$ } & $\left[ 0\ 0\ 0\right] ^{\text{T}}$ \\
\Xhline{1pt} &  &  &
\end{tabular}}}
\end{table}

\subsubsection{Controller parameter values}

\label{controller parameter values} The tuning parameters of the proposed
AUSMAMC law (\ref{sliding mode control}) are chosen by a trail-to-trail
selection. The parameters of the controller (11) \cite{zhu2011adaptive}, and
controller (23) \cite{tiwari2018spacecraft} are chosen the same as in \cite{zhu2011adaptive} and \cite{tiwari2018spacecraft}, respectively. The value of the parameters of the
above controllers are shown in TABLE~\ref{tuning parameter}. In addition, $%
\gamma_{2}\left(t\right)$ can be obtained from~(\ref{gamma2}).
\begin{table}[thb]
\caption{Control parameters chosen for numerical analysis}
\label{tuning parameter}\centering
{%
\scalebox{0.9}{\scriptsize
\begin{tabular}{lll}
\Xhline{1pt} Control schemes & control parameters &  \\ \hline
\vspace{0.1cm} \makecell[{cc}]{AUSMAMC (\ref{sliding mode control})} & \makecell[{ll}]{ $\lambda=2,\ \gamma_1=10,\ \varepsilon=0.5$} &  \\
\Xhline{0.5pt}
\vspace{0.1cm} \makecell[{cc}]{\makecell[{ll}]{Controller (11)\\~\protect\cite{zhu2011adaptive}} } & \makecell[{ll}]{$k=1,\ \tau=15I_{3},\ \sigma=0.001I_{3},$\\$\ p_{0}
=1,p_{1}=1,p_{2}=1,$ \\$\hat{c}\!\left(0\right)
\!=\!1,\ \hat{k}_{1}\!\left(0\right)\!=\!0.1,\ \hat{k}_{2}\!\left(0\right)\!=\!0.1$} &  \\
\Xhline{0.5pt}
\vspace{0.1cm} \makecell[{cc}]{\makecell[{ll}]{Controller (23)\\~\protect\cite{tiwari2018spacecraft}} } & \makecell[{ll}]{$v_{1}=5I_{3}, v_{2}=7I_{3}, \rho=0.001I_{3},$\\$k
=1.5,\eta=0.14, \theta=7,$ \\$\hat{K}_{1}\left(0\right)
=0,\ \hat{K}_{2}\left(0\right)=0$} &  \\
\Xhline{1pt} &  &
\end{tabular}}}
\end{table}

\subsubsection{Control goal}

\label{control goal} The control goal is to perform two rest-to-rest
attitude maneuvers for the rigid spacecraft with system parameters given in
Section~\ref{system parameters}.
Two different scenarios of desired attitude value are given in the following.

Scenario A. The initial values of the desired quaternion and angular
velocity are $\boldsymbol{q}_{\mathrm{d}}=\left[ 0.8832\ 0.3\ -0.2\ -0.3%
\right] ^{\mathrm{T}},$ and $\boldsymbol{\omega}_{\mathrm{d}} =\left[ 0\ 0\ 0%
\right] ^{\mathrm{T}}$rad/s, respectively.

Scenario B. The initial values of the desired quaternion and angular
velocity are $\boldsymbol{q}_{\mathrm{d}}=\left[ -0.6403\ -0.5\ -0.3\ 0.5%
\right] ^{\mathrm{T}},$ and $\boldsymbol{\omega}_{\mathrm{d}} =\left[ 0\ 0\ 0%
\right] ^{\mathrm{T}}$rad/s, respectively.

In Scenario A, $\boldsymbol{q}_{\mathrm{d}0}>0$, thus $\boldsymbol{q}_{%
\mathrm{e}}=\left[ 1\ 0\ 0\ 0\right] ^{\mathrm{T}}$ is the nearest
equilibrium. In addition, according to the first equation of (\ref{unit
attitude error}), the spacecraft needs to rotate $55.93^{\circ}$ to reach
the equilibrium point. In Scenario B, $\boldsymbol{q}_{\mathrm{d}%
0}\left(0\right)<0$, and the spacecraft needs to tilt $259.62^{\circ}$ if
only the equilibrium $\boldsymbol{q}_{\mathrm{e}}=\left[ 1\ 0\ 0\ 0 \right]
^{\mathrm{T}}$ is considered. However, the spacecraft only need to rotate $%
100.38^{\circ}$ if $\boldsymbol{q}_{\mathrm{e}}=\left[ -1\ 0\ 0\ 0\right] ^{%
\mathrm{T}}$ is also considered as an equilibrium.

\subsection{Simulation results}

\subsubsection{Simulation results for Scenario A}

The controller (11)~\cite{zhu2011adaptive} and proposed AUSMAMC law (\ref%
{sliding mode control}) are adopted to do simulations for Scenario A. The
simulation results are shown in Fig. \ref{fig_2}, where the controller A
represents the controller (11)~\cite{zhu2011adaptive}.

The response of error quaternions $q_{\mathrm{e}i}, i=1,2,3,4$ and angular
velocity error $\omega_{\mathrm{e}i}, i=1,2,3$ are shown in Fig.~\ref{fig_A1}
and Fig.~\ref{fig_C1}, respectively. It can be seen from Fig. \ref{fig_A1}
and Fig. \ref{fig_C1} that the attitude errors of system~(\ref{system model}%
) converge to $0$ in about $4\mathrm{s}$ by adopting the proposed AUSMAMC
law (\ref{sliding mode control}), while the Controller A needs longer time.
In addition, it can be easily obtained from these two figures that the
steady attitude errors of the developed control law AUSMAMC law (\ref%
{sliding mode control}) are smaller than that of the Controller A. The
spacecraft attitude responses using Euler angles $\phi, \theta, \psi$ ($%
\phi, \theta, \psi$ are the roll, pitch, and yaw angles, respectively) are
shown in Fig. \ref{fig_E1}, which indicates that the attitude maneuver
problem can be effectively settled by the controller AUSMAMC law (\ref%
{sliding mode control}) and Controller A. 
The time evolution of control torques $u_{i},i=1,2,3$ are shown in Fig. \ref%
{fig_D1}. The control torque of the proposed AUSMAMC law (\ref{sliding mode
control}) is smaller than that of Controller A.

The AUSMAMC controller is able to obtain higher pointing accuracy and better
stability in a shorter time.
\begin{figure*}[t]
\centering
\subfigure[Time response of quaternion $q$]{
		\label{fig_A1}
		\includegraphics[width=2.5in]{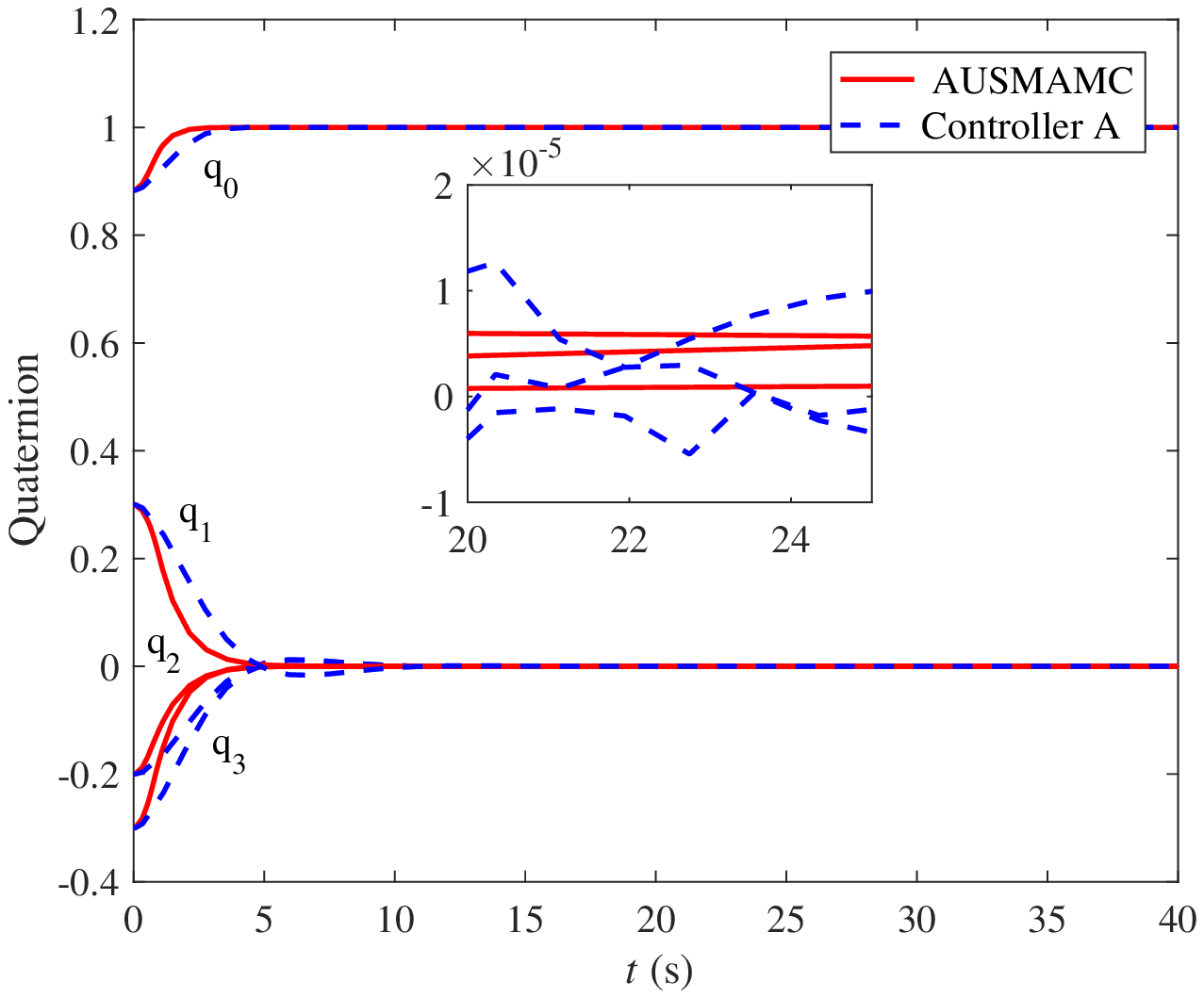}} 
\hspace{1cm}
\subfigure[Time response of the angular velocity]{
		\label{fig_C1}
		\includegraphics[width=2.5in]{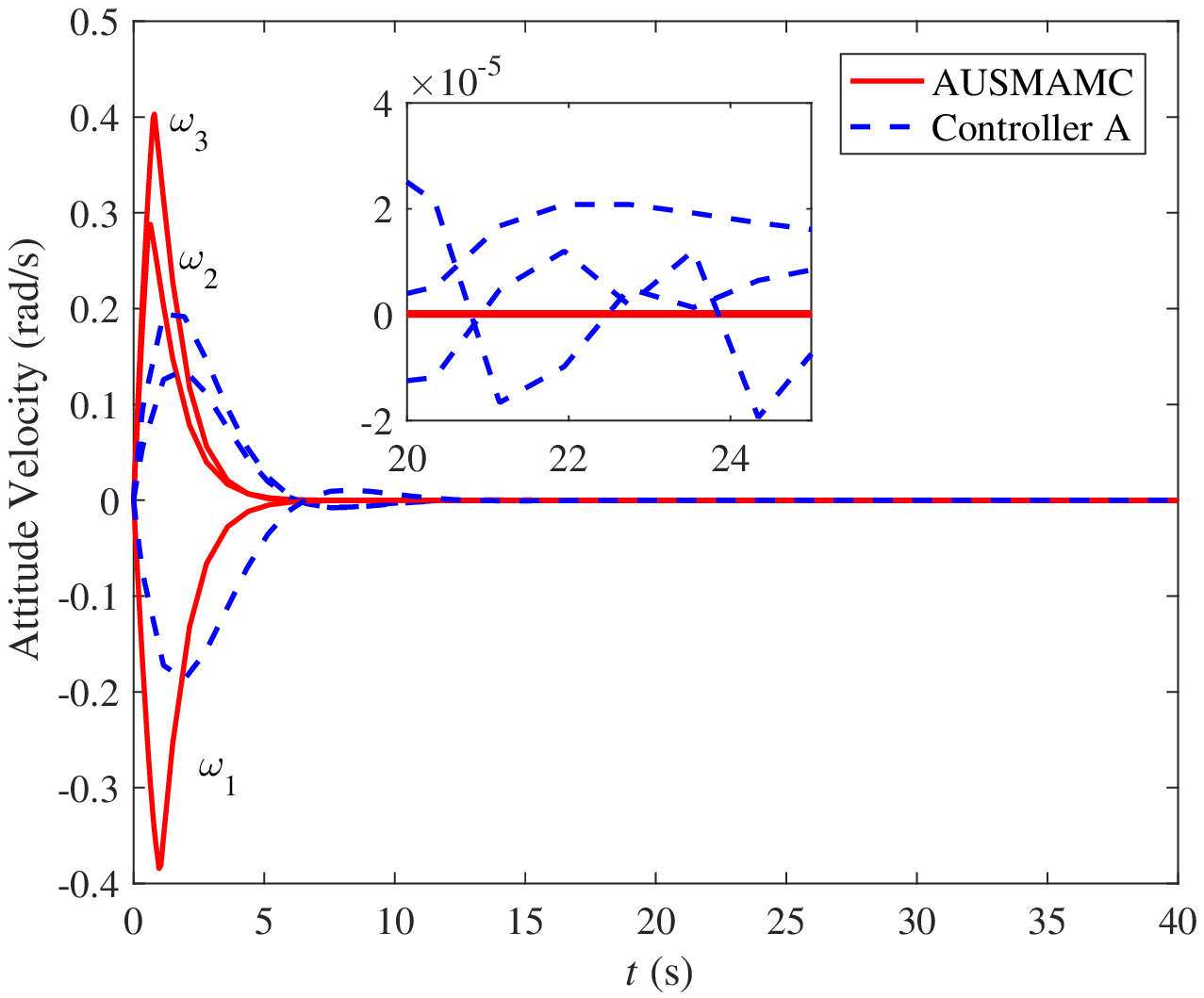}} 
\subfigure[Evolution of the Euler angles $\phi,\theta, \psi $ for the Scenario A]{
		\label{fig_E1}
		\includegraphics[width=2.5in]{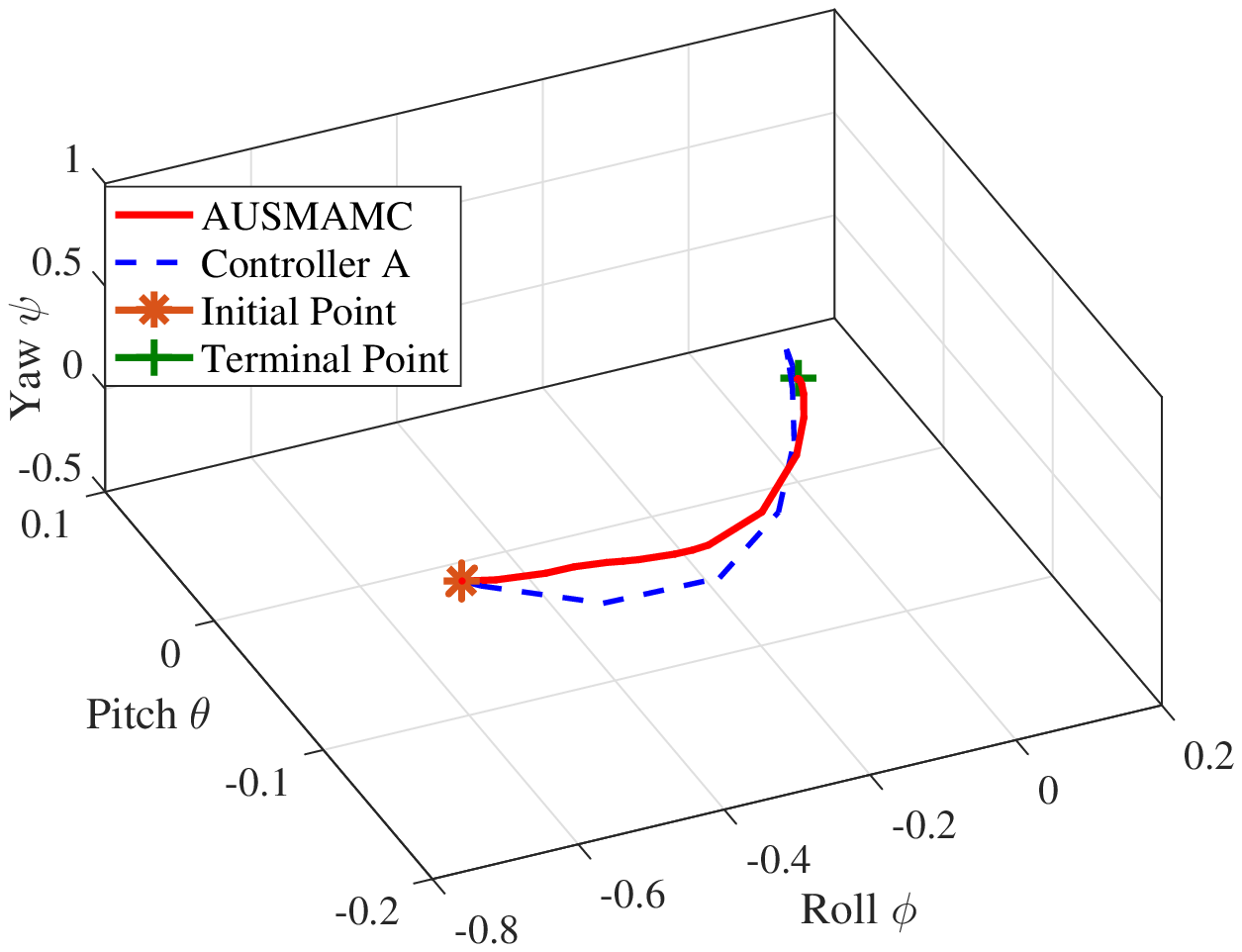}} 
\hspace{1cm}
\subfigure[Time response of the control torques]{
		\label{fig_D1}
		\includegraphics[width=2.5in]{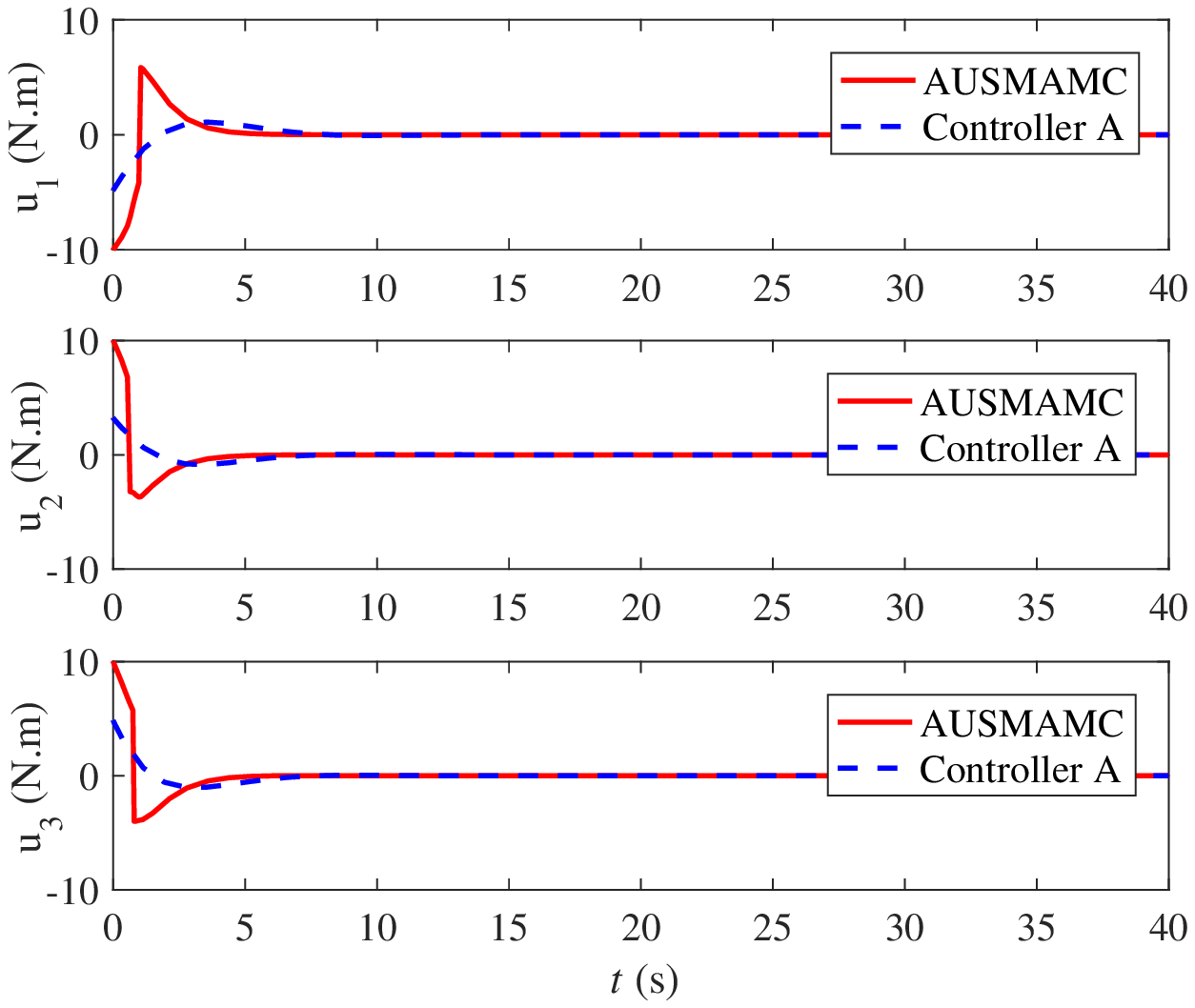}}
\caption{Comparison results of AUSMAMC law (\protect\ref{sliding mode
control}) and controller (11)~\protect\cite{zhu2011adaptive} for Scenario A }
\label{fig_2}
\end{figure*}

\subsubsection{Simulation results for Scenario B}

The controller (11)~\cite{zhu2011adaptive}, controller (23)~\cite%
{tiwari2018spacecraft}, and the proposed AUSMAMC law (\ref{sliding mode
control}) are adopted to do simulations for Scenario B. The simulation
results are summarized in Fig. \ref{fig_3}, where the controller A is the
controller (11)~\cite{zhu2011adaptive}, and controller B is the controller
(23)~\cite{tiwari2018spacecraft}.

The response of error quaternions $q_{\mathrm{e}i}, i=1,2,3,4$ are shown in
Fig. \ref{fig_A3}, which indicates that $\boldsymbol{q}_{\mathrm{e}}$
converges to the nearest equilibrium $[-1\ 0\ 0\ 0]$ in about $5\mathrm{s}$
by adopting the presented controller AUSMAMC (\ref{sliding mode control})
and Controller B. However, $\boldsymbol{q}_{\mathrm{e}}$ converges to $[1\
0\ 0\ 0]$ in about $14\mathrm{s}$ by adopting the Controller A.
Thus, it can be obtained that the presented AUSMAMC law (\ref{sliding mode
control}) in this paper and (23)~\cite{tiwari2018spacecraft} avoids
unwinding phenomenon successfully, but the Controller A suffers unwinding
problem. The behaviour of angular velocity error $\omega_{\mathrm{e}i},
i=1,2,3$ is shown in Fig. \ref{fig_C3}. It can be observed from Fig. \ref%
{fig_C3} that the attitude velocity of the rigid spacecraft (\ref{system
model}) converges to $0$ in about $5\mathrm{s}$ by using the proposed
AUSMAMC law (\ref{sliding mode control}) and Controller B, while the
Controller A needs longer time. In addition, it can be easily obtained from
these two figures that the steady attitude errors of the developed control
law AUSMAMC law~(\ref{sliding mode control}) are smaller than that of the
Controller A and Controller B. The spacecraft attitude responses using Euler
angles $\phi, \theta, \psi$ ($\phi, \theta, \psi$ are the roll, pitch, and
yaw angles, respectively) are shown in Fig. \ref{fig_B3}. The maneuver angle
of the AUSMAMC law~(\ref{sliding mode control}) and Controller B is smaller
than that of Controller A. The control torques $u_{i},i=1,2,3$ are shown in
Fig. \ref{fig_D3}, which indicates that the attitude maneuver is
effectively settled by the controller AUSMAMC law (\ref{sliding mode control}%
), the Controllers A and B. It can also be observed that the control torque
of the proposed control law is less than that of the Controllers A and B.
\begin{figure*}[t]
\centering
\subfigure[Time response of quaternion $q$]{
		\label{fig_A3}
		\includegraphics[width=2.5in]{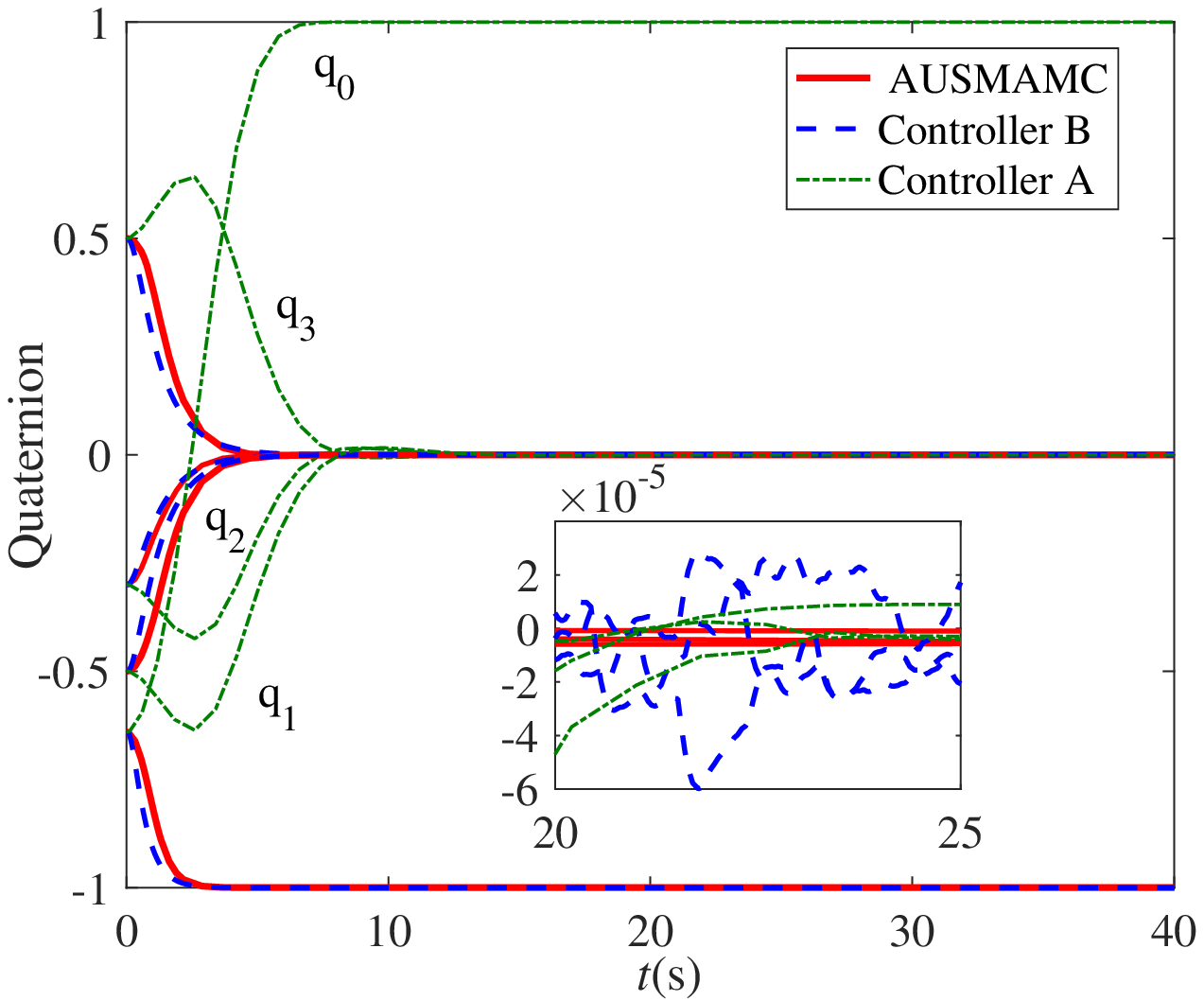}} 
\hspace{1cm}
\subfigure[Time response of the angular velocity]{
		\label{fig_C3}
		\includegraphics[width=2.5in]{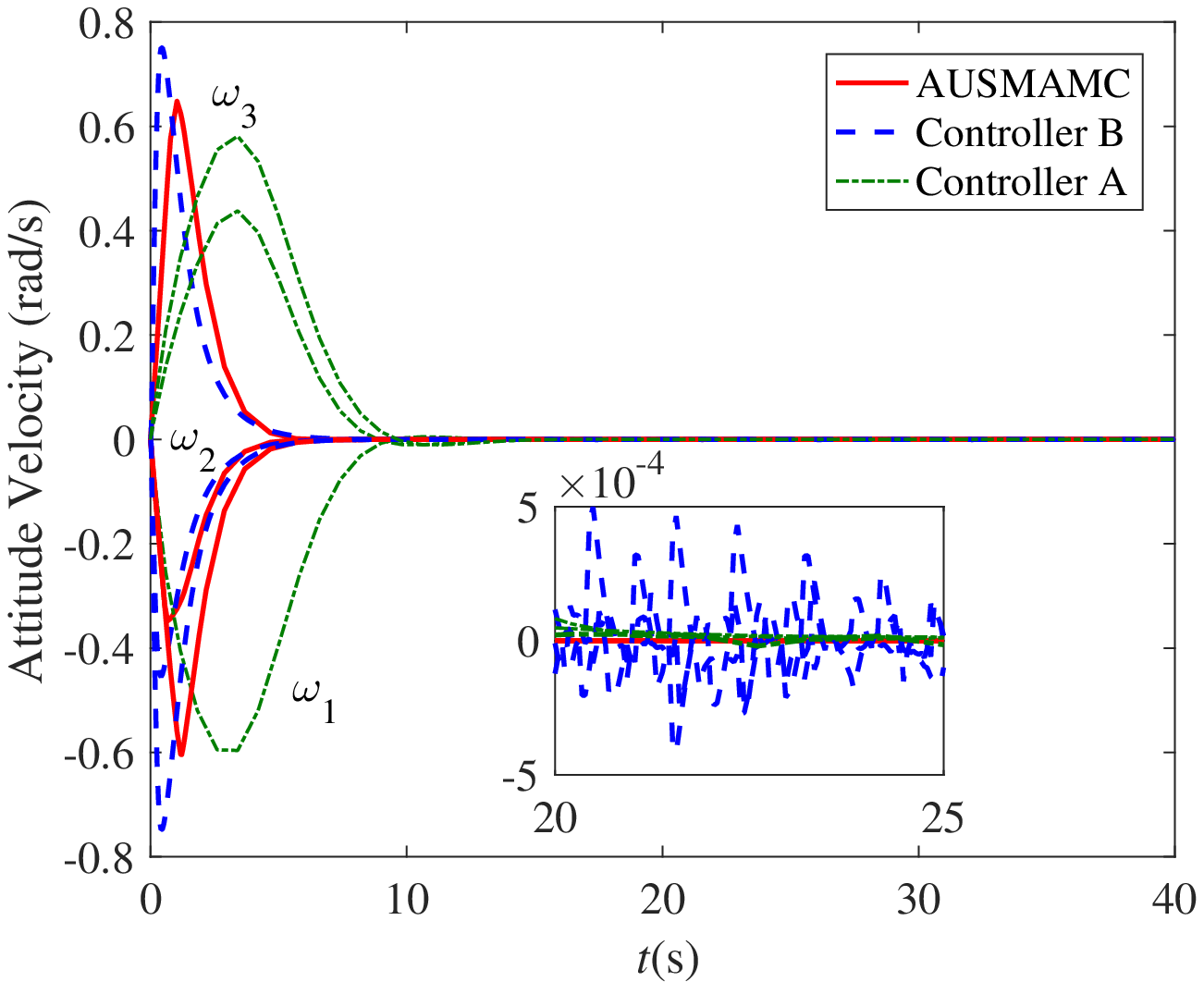}} 
\subfigure[Evolution of the Euler angles $\phi,\theta, \psi $ for the Scenario B]{
		\label{fig_B3}
		        \includegraphics[width=2.5in]{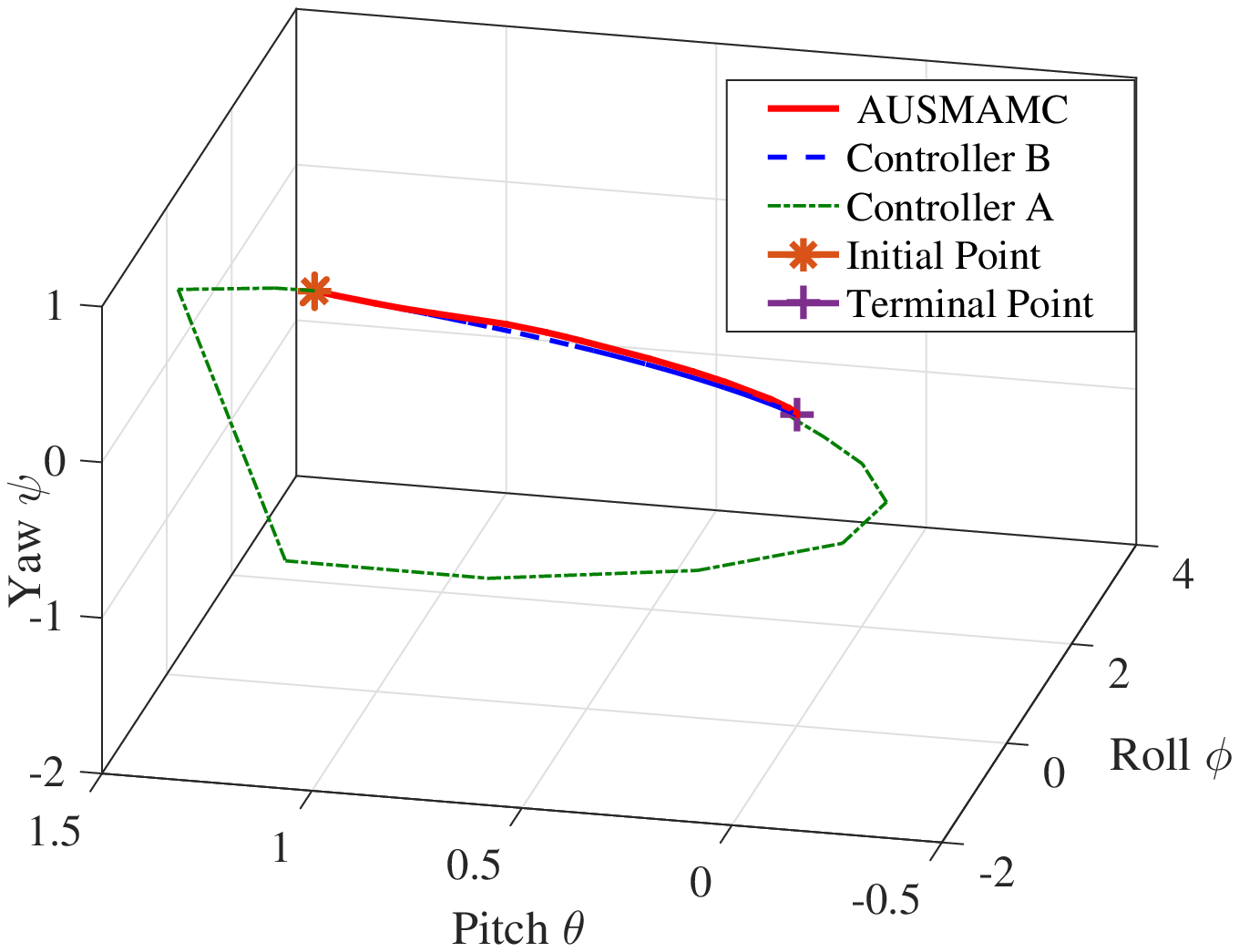}} 
\hspace{1cm}
\subfigure[Time response of the control torques]{\label{fig_D3}
		\includegraphics[width=2.5in]{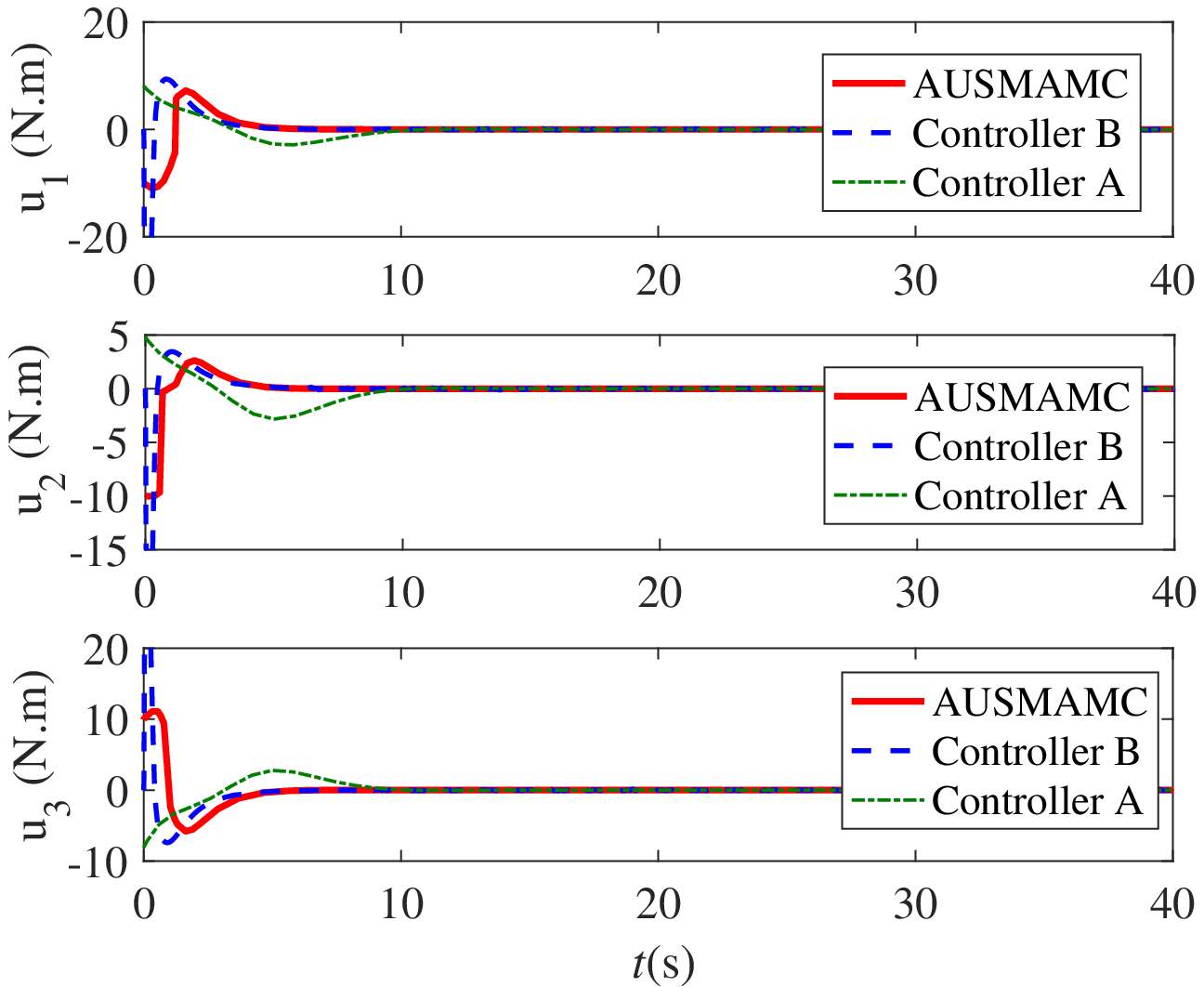}} 
\caption{Comparison results of AUSMAMC law (\protect\ref{sliding mode
control}) and controllers (11)~\protect\cite{zhu2011adaptive} and (23)~%
\protect\cite{tiwari2018spacecraft} for Scenario B}
\label{fig_3}
\end{figure*}

In conclusion, the proposed AUSMAMC controller~(\ref{sliding mode control})
satisfies the control objective described in Section \ref{controobjective},
and it achieves higher pointing accuracy and better stability in a shorter
time compared with the controller (11) \cite{zhu2011adaptive}, and
controller (23) \cite{tiwari2018spacecraft}.%

\section{Conclusion}

In this paper, an anti-unwinding attitude maneuver control law is presented
for rigid spacecraft. By constructing a new switching
surface, which contains two equilibriums, the unwinding problem is settled
when the system states are on the switching surface. Moreover, by designing
a sliding mode control law with a dynamic parameter, the anti-unwinding
performance is guaranteed before the system states reach the switching
surface.
Further, the switching function, the attitude velocity error, and the vector
part of error quaternion converge to zero under the designed anti-unwinding
sliding mode attitude maneuver control law. Finally, a numerical simulation
is conducted to demonstrate the effectiveness of the developed control law.
The simulation results show that the unwinding phenomenon is avoided by
adopting the designed switching surface and controller.


\bibliographystyle{ieeetr}
\bibliography{myreference}

\begin{thebibliography}{10}

\bibitem{li2016robust}
Y.~Li, Z.~Sun, and D.~Ye, ``Robust linear pid controller for satellite attitude
  stabilisation and attitude tracking control,'' {\em International Journal of
  Space Science and Engineering}, vol.~4, no.~1, pp.~64--75, 2016.

\bibitem{jin2018lpv}
R.~Jin, X.~Chen, Y.~Geng, and Z.~Hou, ``Lpv gain-scheduled attitude control for
  satellite with time-varying inertia,'' {\em Aerospace Science and
  Technology}, vol.~80, pp.~424--432, 2018.

\bibitem{chen2000adaptive}
B.~S. Chen, C.~S. Wu, and Y.~W. Jan, ``Adaptive fuzzy mixed ${H}_2/{H}_\infty$
  attitude control of spacecraft,'' {\em IEEE Transactions on Aerospace and
  Electronic Systems}, vol.~36, no.~4, pp.~1343--1359, 2000.

\bibitem{guo2019velocity}
Y.~Guo, B.~Huang, J.~Guo, A.~Li, and C.~Wang, ``Velocity-free sliding mode
  control for spacecraft with input saturation,'' {\em Acta Astronautica},
  vol.~154, pp.~1--8, 2019.

\bibitem{gao2012robust}
X.~Gao, K.~L. Teo, and G.~Duan, ``Robust ${H_\infty}$ control of spacecraft
  rendezvous on elliptical orbit,'' {\em Journal of the Franklin Institute},
  vol.~349, no.~8, pp.~2515--2529, 2012.

\bibitem{utkin1977variable}
V.~Utkin, ``Variable structure systems with sliding modes,'' {\em IEEE
  Transactions on Automatic control}, vol.~22, no.~2, pp.~212--222, 1977.

\bibitem{dodds1991sliding}
S.~Dodds and A.~Walker, ``Sliding-mode control system for the three-axis
  attitude control of rigid-body spacecraft with unknown dynamics parameters,''
  {\em International Journal of control}, vol.~54, no.~4, pp.~737--761, 1991.

\bibitem{bang2005flexible}
H.~Bang, C.~K. Ha, and J.~H. Kim, ``Flexible spacecraft attitude maneuver by
  application of sliding mode control,'' {\em Acta Astronautica}, vol.~57,
  no.~11, pp.~841--850, 2005.

\bibitem{hu2006control}
Q.~Hu and G.~Ma, ``Control of three-axis stabilized flexible spacecraft using
  variable structure strategies subject to input nonlinearities,'' {\em Journal
  of Vibration and Control}, vol.~12, no.~6, pp.~659--681, 2006.

\bibitem{hu2006adaptive}
Q.~Hu and G.~Ma, ``Adaptive variable structure maneuvering control and
  vibration reduction of three-axis stabilized flexible spacecraft,'' {\em
  European Journal of Control}, vol.~12, no.~6, pp.~654--668, 2006.

\bibitem{chen1993sliding}
Y.~P. Chen and S.~C. Lo, ``Sliding-mode controller design for spacecraft
  attitude tracking maneuvers,'' {\em IEEE Transactions on Aerospace and
  Electronic Systems}, vol.~29, no.~4, pp.~1328--1333, 1993.

\bibitem{zong2010higher}
Q.~Zong, Z.~Zhao, and J.~Zhang, ``Higher order sliding mode control with
  self-tuning law based on integral sliding mode,'' {\em IET Control Theory \&
  Applications}, vol.~4, no.~7, pp.~1282--1289, 2010.

\bibitem{tiwari2016attitude}
P.~M. Tiwari, S.~Janardhanan, and M.~Nabi, ``Attitude control using higher
  order sliding mode,'' {\em Aerospace Science and Technology}, vol.~54,
  pp.~108--113, 2016.

\bibitem{li2006global}
J.~Li and C.~Qian, ``Global finite-time stabilization by dynamic output
  feedback for a class of continuous nonlinear systems,'' {\em IEEE
  Transactions on Automatic control}, vol.~51, no.~5, pp.~879--884, 2006.

\bibitem{wu2011quaternion}
S.~Wu, G.~Radice, Y.~Gao, and Z.~Sun, ``Quaternion-based finite time control
  for spacecraft attitude tracking,'' {\em Acta Astronautica}, vol.~69,
  no.~1-2, pp.~48--58, 2011.

\bibitem{wu2018adaptive}
A.~Wu, R.~Dong, Y.~Zhang, and L.~He, ``Adaptive sliding mode control laws for
  attitude stabilization of flexible spacecraft with inertia uncertainty,''
  {\em IEEE Access}, vol.~7, pp.~7159--7175, 2018.

\bibitem{ji2018vibration}
N.~Ji and J.~Liu, ``Vibration control for a flexible satellite with input
  constraint based on nussbaum function via backstepping method,'' {\em
  Aerospace Science and Technology}, vol.~77, pp.~563--572, 2018.

\bibitem{guo2014adaptive}
Y.~Guo and S.~Song, ``Adaptive finite-time backstepping control for attitude
  tracking of spacecraft based on rotation matrix,'' {\em Chinese Journal of
  Aeronautics}, vol.~27, no.~2, pp.~375--382, 2014.

\bibitem{hu2015spacecraft}
Q.~Hu, L.~Li, and M.~I. Friswell, ``Spacecraft anti-unwinding attitude control
  with actuator nonlinearities and velocity limit,'' {\em Journal of Guidance,
  Control, and Dynamics}, vol.~38, no.~10, pp.~2042--2050, 2015.

\bibitem{tiwari2018spacecraft}
P.~M. Tiwari, S.~Janardhanan, and M.~Nabi, ``Spacecraft anti-unwinding attitude
  control using second-order sliding mode,'' {\em Asian Journal of Control},
  vol.~20, no.~1, pp.~455--468, 2018.

\bibitem{kristiansen2005satellite}
R.~Kristiansen and P.~J. Nicklasson, ``Satellite attitude control by
  quaternion-based backstepping,'' in {\em Proceedings of the 2005, American
  Control Conference, 2005.}, pp.~907--912, IEEE, 2005.

\bibitem{di2003output}
S.~Di~Gennaro, ``Output stabilization of flexible spacecraft with active
  vibration suppression,'' {\em IEEE Transactions on Aerospace and Electronic
  systems}, vol.~39, no.~3, pp.~747--759, 2003.

\bibitem{Lemmastability}
S.~P. Bhat and D.~S. Bernstein, ``Finite-time stability of continuous
  autonomous systems,'' {\em SIAM Journal on Control and Optimization},
  vol.~38, no.~3, pp.~651--766, 2000.

\bibitem{zhu2011adaptive}
Z.~Zhu, Y.~Xia, and M.~Fu, ``Adaptive sliding mode control for attitude
  stabilization with actuator saturation,'' {\em IEEE Transactions on
  Industrial Electronics}, vol.~58, no.~10, pp.~4898--4907, 2011.

\end{thebibliography}

\end{document}